\newtheorem{theorem}{Theorem}%[section]
\newtheorem{lemma}[theorem]{Lemma}
\newtheorem{corollary}[theorem]{Corollary}
\theoremstyle{definition}
\newtheorem{remark}{Remark}
\newcommand{\X}{\mathbf X}
\newcommand{\Z}{\mathbb Z}
\newcommand{\N}{\mathbb N}
\newcommand{\Q}{\mathbb Q}
\newcommand{\SE}{\mathcal S}
\newcommand{\PE}{\mathcal P}
\newcommand{\NUL}{\mathcal N}
\DeclareMathOperator{\sgn}{\rm sgn}
\newcommand{\pos}[1]{#1_\oplus}  %{{\rm pos}(#1)}
\newcommand{\zap}[1]{#1_\ominus}  %{{\rm pos}(#1)}
\newcommand{\deli}{\bigm|}
\newcommand{\tz}{\bigm|}
\newcommand{\alp}{{\rm alp}}
\newcommand{\ph}{\varphi}
\newcommand{\Xfactor}[1]{\X^{\pos{#1}}-\X^{\zap{#1}}}
\begin{document}
\title{Algebraic properties of word equations}

\author{\v St\v ep\' an Holub}
\author{Jan \v Zemli\v cka}
\address{Department of Algebra, Charles University, Sokolovsk\'a 83, 175 86 Praha, Czech Republic}
\email{holub@karlin.mff.cuni.cz}
\email{zemlicka@karlin.mff.cuni.cz}
\subjclass{68R15}
\keywords{independent systems of word equations, multivariate polynomials}
\thanks{Supported by the Czech Science Foundation grant number 13-01832S}

\begin{abstract}
The question about maximal size of independent system of word equations is one of the most striking
problems in combinatorics on words. Recently, Aleksi Saarela has introduced a new approach to the problem that is based on linear-algebraic properties of polynomials encoding the equations and their solutions. In this paper we develop further this approach and take into account other algebraic properties of polynomials, namely their factorization. This, in particular, allows to improve the bound for the number of independent equations with maximal rank from quadratic to linear.  
\end{abstract}

\date{\today}

\maketitle

\section{Introduction}

The question about maximal size of independent system of word equations is one of the most striking
problems in combinatorics on words. The conjecture that such a system cannot be infinite (known as Ehrenfeucht's conjecture) had been open for more than a decade, until solved in \cite{Albert} by embedding the free monoid into a metabelian group, and independently in \cite{Guba} by using matrix representation (and generalizing the result to free groups). Both these solutions indicated that, despite strongly combinatorial nature of words, algebraic methods may be necessary to approach problems concerning word equations.

Since the proof of Ehrenfeucht's conjecture, the question about the possible size of independent systems moved to the center of investigation. The above algebraic methods, based ultimately on Hilbert's basis theorem, did not help to approach the problem. The solution is simple for two unknowns because any nontrivial word equation over two unknowns possesses only periodic solutions. However, already for three unknowns, no upper bound is known up to date, leaving open even the possibility that arbitrarily large independent systems of equations over three unknowns exist. This should be contrasted to the fact that the largest known independent system over three unknowns consists of three equations. Some partial results have been obtained in \cite{Dirk} and \cite{Elena}. For general number $n$ of unknowns, independent systems of size $\Theta(n^4)$ have been explicitly constructed in \cite{Juhani}. For more details see also \cite{Lothaire}.

The research got a new impetus recently, when Aleksi Saarela, in \cite{Saarela14}, introduced an idea how to encode word equations to the language of polynomial algebra and exploit its linear algebraic properties. This approach allowed to obtain an upper bound for the size of an independent system of equations over three unknowns, namely quadratic in the length of the shortest equation. In general, the method allows to limit the size of independent systems that have solutions with maximal possible rank.
In this paper we develop further this approach, and applying classical algebraic tools, in particular irreducible factorization of multivariate polynomials, we are able to improve Saarela's bounds from quadratic to linear.  

Basic facts about word equations and corresponding linear algebra are introduced in Section 2. 
The properties of principal solutions (Theorem~\ref{principal}) and their rank are presented here.
The main goal of Section 3 is to translate Saarela's results from less usual notation
of general monoid rings to classical terminology of multivariate rational polynomials which allows to employ
divisibility of obtained polynomials. An application of these tools gives bounds proved in the final section of
this paper. As a consequence of two observations, elementary algebraic (Lemma~\ref{generator})
and finer one of geometrical nature (Theorem~\ref{factors}), we get two pairs of upper bounds.
The first are bounds on the number of pairwise linearly nonequivalent solutions of rank $n-1$ of two strongly independent
equations in $n$ unknowns (Theorem~\ref{main}) and the second are the bounds on the maximal size of strongly independent system
of word equations (Theorem~\ref{main2}).

\section{Solutions of word equations}

In this section we review some facts about systems of word equations and their solutions.

Throughout the paper, $n$ is a number of unknowns, i.e. an integer usually greater then 2, and $\N_0$ denotes the set of all nonnegative integers.
If $\Xi=\{x_1,x_2,\dots,x_n\}$ is a set of unknowns, then a pair $(u,v)\in \Xi^*\times \Xi^*$ is an \emph{equation}. 
 A morphism $h:\Xi^*\to \Sigma^*$ is a \emph{solution} of a system of equations $T=\{(u_i,v_i) \ |\ i\in I\}$ if $h(u_i)=h(v_i)$ for each $i\in I$. If $h(c)$ is the empty word for at least one $c$ in the domain alphabet, then we say that $h$ is \emph{erasing}. The set of letters occurring in a word $w$ is denoted by $\alp(w)$ and $\alp(h)$ denotes $\bigcup_{x\in \Xi} \alp(h(x))$.

The system is \emph{trivial} if $u_i=v_i$ for all $i\in I$.  The vector \[L(h)=(|h(x_1)|,|h(x_2)|,\dots,|h(x_n)|)\in \N_0^n\] is called the \emph{length type} of $h$. 

 For a solution $h$ of the system  $T$, we shall implicitly assume that the domain alphabet $\Xi$ of $h$ is equal to $\alp(T)=\bigcup_{i\in I}\alp(u_iv_i)$. The number of occurrences of a letter $c$ in a word $w$ is denoted by $|w|_c$.

We say that the solution $h'$ of $T$ \emph{divides} the solution $h$ if $h=\vartheta\circ h'$ where $\vartheta$ is not erasing and defined on $\alp(h)$. The solution is called \emph{principle} if it is minimal in the divisibility ordering just defined. In other words, if $h$ is principle and $h=\vartheta\circ h'$ for a solution $h'$, then $\vartheta$ is a renaming of letters.

For sake of completeness we prove the following theorem. Our proof partly follows the standard reference, Proposition 9.5.2 in \cite{lot1}, which, however, is not formulated precisely.
\begin{theorem}\label{principal}
Let  $h:\Xi^*\to \Sigma^*$ be a 
solution of a system $T$. Then there is a unique (up to renaming of letters) principle solution $g$ of $T$ and a unique morphism $\vartheta:\alp(g)^+$ $\to \Sigma^+$ such that $h=\vartheta\circ g$.

Moreover,
\begin{itemize}
	\item $|\alp(g)|<|\alp(T)|$ if $T$ is not trivial; and
	\item  $g$ and $L(\vartheta)$ depend only on $L(h)$ and $T$.
\end{itemize}
\end{theorem}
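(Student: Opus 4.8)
The plan is to realise $g$ as a minimal element for divisibility among the solutions dividing $h$, and then to prove uniqueness by exhibiting, for any two such minimal elements, a common lower bound.

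\emph{Existence.} On the (nonempty) set of solutions dividing $h$ I would put the measure $\mu(f)=(\Lambda(f),-|\alp(f)|)$, with $\Lambda(f)=\sum_k|f(x_k)|$, ordered lexicographically. The load-bearing computation is that a proper division $f=\vartheta'\circ f'$ — with $f'$ a solution and $\vartheta'$ non-erasing but not a renaming — strictly decreases $\mu$: if $|\vartheta'(a)|\ge 2$ for some $a\in\alp(f')$ then $\Lambda(f')<\Lambda(f)$, whereas if $\vartheta'$ is length-preserving it is forced to be non-injective, so $\Lambda(f')=\Lambda(f)$ but $|\alp(f')|>|\alp(f)|$. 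This length-preserving case is precisely where an induction on length alone would break down, so the secondary term is essential. Because $f'$ again divides $h$, any $g$ minimising $\mu$ admits no proper division and is therefore principle; the accompanying $\vartheta$ with $h=\vartheta\circ g$ is then determined on $\alp(g)$ by the equation $h=\vartheta\circ g$ itself, which also settles the uniqueness of $\vartheta$ once $g$ is fixed.

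\emph{Uniqueness.} I would view a solution $f$ dividing $h$ as a factorisation of each $h(x_k)$ into nonempty blocks (the words $\vartheta_f(a)$) together with a colouring of the blocks by $\alp(f)$, compatible in the sense that for every equation the factorisations and colourings induced on $h(u_i)=h(v_i)$ from the two sides coincide. Given principle $g_1,g_2$ dividing $h$, I would form their meet $f$ by taking in each $h(x_k)$ the common coarsening of the two factorisations (the intersection of the sets of cut points) and colouring each coarse block $B$ by the pair formed of its $g_1$-coloured and its $g_2$-coloured refinements. The intersection of two compatible cut sets is again compatible, and the pair-colouring makes the canonical morphisms $\rho_i$ with $g_i=\rho_i\circ f$ well defined and non-erasing; that $f$ is a solution holds because $g_1$ and $g_2$ are, so the paired colours agree across every equation. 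Minimality of $g_1,g_2$ then forces each $\rho_i$ to be a renaming, whence $g_1\cong f\cong g_2$. I expect the careful transport of cut points and colours across the alignment $h(u_i)=h(v_i)$ to be the main obstacle; once it is set up, the rest is formal.

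\emph{The two further claims.} For the dependence on $L(h)$ and $T$, given a second solution $h'\colon\Xi^*\to(\Sigma')^*$ with $L(h')=L(h)$ I would pair $h$ and $h'$ letter by letter into a single solution $H$ over $\Sigma\times\Sigma'$ (legitimate exactly because the length types match), so that $h=\pi\circ H$ and $h'=\pi'\circ H$ for the length-preserving coordinate projections. The principle solution of $H$ divides both $h$ and $h'$ and is itself principle, so by the uniqueness above it agrees, up to renaming, with $g$ and with the principle solution of $h'$; length-preservation of $\pi,\pi'$ then yields the equality of the corresponding $L(\vartheta)$. Finally, if $T$ is nontrivial, some $u_i\ne v_i$ with $g(u_i)=g(v_i)$ witnesses that $g$ is not injective, so the defect theorem factors $g=\iota\circ g'$ through an alphabet $\Gamma$ with $|\Gamma|<|\Xi|=|\alp(T)|$ and $\iota$ injective. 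Injectivity of $\iota$ makes $g'$ a solution, and principality of $g$ forces $\iota$ to be a renaming, giving $|\alp(g)|=|\Gamma|<|\alp(T)|$.
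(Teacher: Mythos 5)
Your route is genuinely different from the paper's. The paper proves all four assertions by one induction on $|\alp(h)|+\sum_x|h(x)|$, using a Levi-type transformation ($y\mapsto xy$, or identification of $x$ and $y$) at the first position where the two sides of a nontrivial equation differ; existence, uniqueness, the bound $|\alp(g)|<|\alp(T)|$ and the dependence on $L(h)$ all fall out of that single recursion. You instead (i) obtain existence from the well-founded measure $(\Lambda(f),-|\alp(f)|)$, and your secondary term correctly handles the length-preserving non-injective divisions that a plain length induction misses; (ii) obtain uniqueness from a meet (common coarsening of the two block decompositions of each $h(x_k)$, with pair colours), which is a clean lattice-theoretic argument and does work once the compatibility of the intersected cut sets across each aligned equation $h(u_i)=h(v_i)$ is written out; (iii) obtain the dependence on $L(h)$ from the product-alphabet pairing over $\Sigma\times\Sigma'$, a trick the paper does not use; and (iv) import the classical defect theorem for the cardinality bound, which the paper instead gets as a byproduct of its induction --- indeed the paper later \emph{derives} the defect effect from this theorem, so you must cite an independent proof of the defect theorem (e.g.\ Lothaire) and also handle the degenerate cases where $g$ erases a variable or identifies two of them, where the set $\{g(x)\}$ is not a set of $n$ nonempty words.

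There is one genuine gap: the uniqueness of $\vartheta$. You claim that once $g$ is fixed, $\vartheta$ is ``determined on $\alp(g)$ by the equation $h=\vartheta\circ g$ itself.'' That is false for a general divisor: with $g(x)=ab$ and $h(x)=ccc$, both $a\mapsto c,\ b\mapsto cc$ and $a\mapsto cc,\ b\mapsto c$ satisfy $h=\vartheta\circ g$, so $h$ and $g$ alone do not determine $\vartheta$. The uniqueness does hold for principal $g$, but it needs principality: for instance, run your own meet construction on the two block decompositions of each $h(x_k)$ induced by $(g,\vartheta_1)$ and $(g,\vartheta_2)$; the meet is a solution $f$ with $g=\rho_1\circ f=\rho_2\circ f$, principality forces both $\rho_i$ to be renamings, hence the two cut sets coincide, and comparing colours block by block gives $\vartheta_1=\vartheta_2$ on $\alp(g)$. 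This is not cosmetic: your argument that $L(\vartheta)$ depends only on $L(h)$ identifies $\vartheta$ with $\pi\circ\theta$ precisely by appealing to this uniqueness. With that lemma supplied, the rest of your outline is sound, modulo the cut-point and colour bookkeeping you already flagged as the main technical burden.
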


\begin{proof}
Induction on \[|\alp(h)|+\sum_{x\in\alp(T)}|h(x)|\]
implies that there is at least one principle solution $g$ such that $h=\vartheta\circ g$ for some $\vartheta$.

In order to show that $\vartheta$ is given uniquely and that $g$ and $L(\vartheta)$ are given by $L(h)$, we again proceed by induction. Clearly, $h(x)$ is empty if and only if $g(x)$ is empty. We can therefore suppose that $h$ is not erasing.

If $T$ is trivial, then the only principal solution is identity, $\vartheta=h$ and $L(\vartheta)=L(h)$.

 Let $u\neq v$ for some $(u,v)\in T$, and let $rx$ be a prefix of $u$ and $ry$ a prefix of $v$, where $x\neq y$ are letters, and $r\in \Xi^*$.

Let us first assume that $|h(x)|<|h(y)|$. Then  $h(x)$ is a prefix of $h(y)$. Moreover, also $|g(x)|<|g(y)|$ and $g(x)$ is a prefix of $g(y)$. 
Define $\varphi:\Xi^+\to \Xi^+$ by 
\[
\varphi (z) =
\begin{cases}
xy, & \text{if $z=y$}\\
z, & \text{if $z\neq y$}\,,
\end{cases}
\]
and $h'$, $g'$ by 
\begin{align*}
h'(z)&=
\begin{cases}
h(x)^{-1}h(y) & \text{if $z=y$},\\
h(z) & \text{if $z\neq y$},
\end{cases}
&
g'(z)&=
\begin{cases}
g(x)^{-1}g(y) & \text{if $z=y$},\\
g(z) & \text{if $z\neq y$}.
\end{cases}
\end{align*}
Then  $g'$  and $h'=\vartheta\circ g'$ are solutions of the system $T'=\{(\varphi(u_i),\varphi(v_i))\ |\ i\in I\}$. If $g'=\vartheta'\circ g''$, where $g''$ is a solution of $T'$, then $g''\circ \varphi$ is a solution of $T$ and $g=\vartheta'\circ g''\circ \varphi$, which implies that $\vartheta'$ is a renaming of letters. Therefore $g'$ is a principal solution of $(u',v')$. By induction assumption, $\vartheta$ and $g'$ are unique, and $g'$ and $L(\vartheta)$ are uniquely given by $L(h')$ which is in turn determined by $L(h)$. Since $g=g'\circ \varphi$, we obtain that also $g$ is unique and determined by $L(h)$. Since $\varphi$ is invertible (in the free group), we have $u'\neq v'$ and the induction yields \[|\alp(g)|=|\alp(g')|<|\alp(T')|=|\alp(T)|.\]

The proof is symmetric for $|h(x)|<|h(y)|$. If $|h(x)|=|h(y)|$, then the proof is analogous if we define \[\varphi(y)=x,\quad \text{and} \quad \varphi(z)=z \  \text{otherwise}, \]
and $h'$ and $g'$ are restrictions of $h$ and $g$ respectively on the alphabet of $T'$ which is $\alp(T)\setminus\{y\}$. In this case, the system $T'$ can be trivial but we have $\alp(T')<\alp(T)$.
\end{proof}

Let $h:\{x_1,x_2,\dots,x_n\}^*\to \{a_1,a_2,\dots,a_k\}^*$ be a morphism.
In order to employ linear and polynomial algebra we define non-negative rational vectors $\gamma(h)_i$, $i=1,2,\dots,k$, by
\begin{align*}
	\gamma(h)_i=(|h(x_1)|_{a_i},|h(x_2)|_{a_i},\dots,|h(x_n)|_{a_i}).\label{gamma}
\end{align*}
The set of vectors  $\{\gamma(h)_1,\gamma(h)_2,\dots,\gamma(h)_k\}$ is denoted by $G_h$,
$\Gamma_h$ is the rational vector space generated by $G_h$, and
the dimension of $\Gamma_h$ will be called \emph{rank} of $h$. 
Note that the space $\Gamma_h$ will turn out to be an important linear-algebraic characteristic of a solution $h$ of rank $n-1$, 
that is when $\Gamma_h$ is a hyperplane.

Since rational vectors will serve as one of the main tools in this paper, let us introduce corresponding notation;  
by $\cdot$ we denote the standard dot product on $\Q^n$, for every vector $\alpha\in \Q^n$, 
the $i$-th coordinate of $\alpha$ is denoted by $(\alpha)_i$ and $\pos{\alpha}$, $\zap\alpha$ represents
the uniquely determined nonnegative vectors for which 
$\alpha=\pos\alpha-\zap\alpha$ and $\pos\alpha\cdot\zap\alpha=0$. 
If $M\subseteq \Q^n$, then $M\Q$ is the subspace of the vector space $\Q^n$ generated by $M$.

If $\alpha\in \N_0^r$, then the endomorphism $\vartheta_\alpha$ of $\{a_1,a_2,\dots,a_r\}^*$ will be
defined by the condition $a_i\mapsto a_i^{(\alpha)_i}$. 
Rank of principal solutions has the following property.

\begin{lemma}\label{pismena}
Let $g$ be a principal solution of a system $T$. Then the rank of $g$ is equal to the cardinality of $\alp(g)$.
If $h=\vartheta\circ g$, then the rank of $h$ is at most the rank of $g$.
\end{lemma}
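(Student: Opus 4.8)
The plan is to treat the two assertions separately, settling the monotonicity statement by a direct linear-algebra computation and the equality of rank and $|\alp(g)|$ by an induction that parallels the reduction used in the proof of Theorem~\ref{principal}.

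For the monotonicity statement I would express the vectors $\gamma(h)_i$ in terms of the $\gamma(g)_b$. Writing $\alp(g)=\{b_1,\dots,b_m\}$ and using $h(x_j)=\vartheta(g(x_j))$ one gets
\[
|h(x_j)|_{a_i}=\sum_{l=1}^m |g(x_j)|_{b_l}\,|\vartheta(b_l)|_{a_i},
\]
so that $\gamma(h)_i=\sum_{l}|\vartheta(b_l)|_{a_i}\,\gamma(g)_{b_l}$ for every $i$. Hence every generator of $\Gamma_h$ lies in $\Gamma_g$, giving $\Gamma_h\subseteq\Gamma_g$ and therefore $\operatorname{rank}(h)=\dim\Gamma_h\le\dim\Gamma_g=\operatorname{rank}(g)$.

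For the first assertion, the inequality $\operatorname{rank}(g)\le|\alp(g)|$ is immediate, since $\gamma(g)_i=0$ whenever $a_i\notin\alp(g)$, so $G_g$ contains at most $|\alp(g)|$ nonzero vectors. The content is the reverse inequality, i.e. the linear independence of $\{\gamma(g)_i : a_i\in\alp(g)\}$. I would prove this by induction on the quantity $|\alp(h)|+\sum_{x}|h(x)|$ from Theorem~\ref{principal}, following the same case analysis. In the base case $T$ is trivial and the principal solution is the identity on $\alp(T)$; then $\gamma(g)_{x_j}=e_j$ is the $j$-th standard basis vector, independence is clear, and $\operatorname{rank}(g)=|\alp(T)|=|\alp(g)|$. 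For the inductive step I use $g=g'\circ\varphi$, where $g'$ is the principal solution of the reduced system $T'$ and $\varphi$ is the transformation from the proof of Theorem~\ref{principal}, the key point being that $\varphi$ induces an invertible linear change on the $\gamma$-vectors. In the length-decreasing case $\varphi(y)=xy$ one has, from $g(y)=g'(x)g'(y)$ and $g(z)=g'(z)$ for $z\neq y$, the relation $\gamma(g)_i=A\,\gamma(g')_i$ for all $i$, where $A$ is the unimodular matrix adding the $x$-coordinate to the $y$-coordinate; being invertible, $A$ preserves linear independence, and $\alp(g)=\alp(g')$ because $\varphi$ is onto the generators. In the equal-length case $\varphi(y)=x$ one has $g(x)=g(y)$, so every $\gamma(g)_i$ has equal $x$- and $y$-coordinates and $\gamma(g')_i$ is obtained by deleting the redundant $y$-coordinate; this deletion is injective on the subspace where the two coordinates agree, so independence is again transported from $g'$ to $g$, and $\alp(g)=\alp(g')$.

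The routine part consists of the two linear-algebra computations above. The step I expect to require the most care is verifying, in each branch of the induction, that the map relating $\gamma(g)_i$ to $\gamma(g')_i$ really is invertible (respectively injective on the subspace where the two relevant coordinates coincide) and that no letter of $\alp(g)$ is lost in passing to $g'$. This is precisely where one must invoke the explicit form of $\varphi$ and the identities $g(x)=g'(x)$, $g(y)=g'(x)g'(y)$ (respectively $g(x)=g(y)$) coming from the construction of the principal solution, rather than any abstract property of $g$; a purely count-theoretic argument would not suffice, since equal letter counts alone do not constrain the positional structure of the images.
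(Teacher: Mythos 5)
Your proof is correct, but for the main assertion (rank of $g$ equals $|\alp(g)|$) you take a genuinely different route from the paper. The paper's argument is a two-line reduction to the uniqueness clause of Theorem~\ref{principal}: if $G_g$ were linearly dependent, one could clear denominators and shift by a positive vector to produce two \emph{distinct} $\alpha_1,\alpha_2\in\N^{|\alp(g)|}$ with $L(\vartheta_{\alpha_1}\circ g)=L(\vartheta_{\alpha_2}\circ g)$; since both $\vartheta_{\alpha_i}\circ g$ are solutions of $T$ with the same length type, the fact that $L(\vartheta)$ is determined by $L(h)$ and $T$ forces $\alpha_1=\alpha_2$, a contradiction. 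You instead re-enter the induction of Theorem~\ref{principal} and track the vectors $\gamma(g)_i$ through the Nielsen-type reduction, checking in the length-decreasing case that $\gamma(g)_i=A\,\gamma(g')_i$ for a unimodular $A$ and in the equal-length case that the $\gamma(g)_i$ are preimages of the $\gamma(g')_i$ under a coordinate deletion, so that independence is transported from $g'$ to $g$ in both branches; this is sound, since $g'$ is shown to be principal for $T'$ in the proof of Theorem~\ref{principal} and the measure $|\alp(g)|+\sum_x|g(x)|$ strictly decreases. The trade-off: the paper's proof is far shorter and uses Theorem~\ref{principal} purely as a black box, while yours duplicates its case analysis (and should, like that proof, first dispose of erased unknowns, which only contribute zero coordinates to every $\gamma(g)_i$ and hence do not affect independence); in exchange, yours is constructive and actually exhibits $G_g$ as a unimodular image of the standard basis, which is slightly more information than the mere independence the paper extracts. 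Your treatment of the second assertion via $\gamma(h)_i=\sum_l|\vartheta(b_l)|_{a_i}\,\gamma(g)_{b_l}$ is exactly the computation behind the paper's remark that $G_h\Q$ is a subspace of $G_g\Q$.
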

\begin{proof}
We want to show that the set $G_g$ is linearly independent. Suppose it is not. Then there are two distinct vectors $\alpha_1,\alpha_2\in \N^{|\alp(g)|}$ such that $L(\vartheta_{\alpha_1}\circ g)=L(\vartheta_{\alpha_2}\circ g)$. Since both $\vartheta_{\alpha_1}\circ g$ and $\vartheta_{\alpha_2}\circ g$ are solutions of $T$, we obtain a contradiction with Theorem \ref{principal}.

It is easy to see that $G_h\Q$ is a subspace of $G_g\Q$, which implies the second statement.
\end{proof}

\begin{remark}\label{pozn}
There are several kinds of a rank defined in the literature, see for example \cite{handbookcombinatorics}. The ``combinatorial rank'' is the smallest cardinality of a set $A$ such that $h(x)\in A^*$ for each $x\in \Xi$. The combinatorial rank is used in \cite{Saarela14}. 

From the algebraic point of view, most natural is probably the ``free rank'': the size of the basis of the smallest free monoid containing each $h(x)$, $x\in \Xi$. 

We remark without proof that all the ranks, including the ``linear rank'' we use in this paper, coincide for principal solutions. It is therefore  convenient and recommended, whenever possible, to consider, instead of a general solution $h=\vartheta\circ g$, the principal solution $g$ that divides it, see also Lemma \ref{pismena}. The morphism $\vartheta$ typically destroys some properties of the rank.
For example, if it maps two different letters to the same one. As another example, consider a principal solution of rank three and $\vartheta: a\mapsto abc,\ b\mapsto bca,\ c\mapsto cab$. Then $\vartheta$ preserves both combinatorial and free rank but lowers the linear rank to one. 

Note that Theorem \ref{principal} implies that a principal solution of a nontrivial system in $n$ unknowns has rank at most $n-1$, which is known in the combinatorics on words as the  \emph{defect effect}.
\end{remark}

Put $L_g=\{L(\vartheta_\alpha\circ g)\tz \alpha \in \N_0^k\}$.
In order to point out the property of sets $L_g$ which will play an important role in linking word equations and linear algebra,
let us define the notion of \emph{rank} of a subset $M$ of the rational vector space $\Q^n$ as such an integer $r$ that
$\dim M\Q=r$ and $M$ is not covered by any finite union of $(r-1)$-dimensional subspaces of $M\Q$.
Moreover for $M\subseteq \Q^n$ let
$M\N$ denote the set $\left\{\sum_ia_i \alpha_i\tz \ a_i\in\N, \alpha_i\in M \right\}$
and formulate a lemma describing ranks of some subsets of lattice points of $r$-dimensional Euclidean space

\begin{lemma}\label{rank} If $G\subset \Q^n$ and $N\subset \Z^n$ are such that
$G\N\subseteq N\subseteq G\Q$, then $N$ is of rank $\dim G\Q$.
\end{lemma}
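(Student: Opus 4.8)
The plan is to verify directly the two requirements in the definition of rank for the set $N$: first that $\dim N\Q$ equals $r:=\dim G\Q$, and second that $N$ is not contained in any finite union of $(r-1)$-dimensional subspaces of $N\Q$.

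The first requirement is immediate. Since each single element $\alpha\in G$ equals $1\cdot\alpha\in G\N$, we have $G\subseteq G\N\subseteq N$, whence $G\Q\subseteq N\Q$. Conversely $N\subseteq G\Q$ gives $N\Q\subseteq G\Q$. Thus $N\Q=G\Q$, and in particular $\dim N\Q=r$.

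For the second requirement, which is the heart of the statement, I would show that already the smaller set $G\N$ (contained in $N$) escapes every finite family of hyperplanes of $G\Q$. Fix a basis $\beta_1,\dots,\beta_r\in G$ of $G\Q$, and for each positive integer $t$ consider the point
\[
p_t=\sum_{i=1}^{r} t^{\,i-1}\beta_i .
\]
Because the coefficients $t^{\,i-1}$ are positive integers, every $p_t$ lies in $G\N$, hence in $N$. Suppose now, for contradiction, that $N\subseteq V_1\cup\dots\cup V_m$ with each $V_j$ an $(r-1)$-dimensional subspace of $N\Q=G\Q$. Writing $V_j=\Ker\ell_j$ for a nonzero linear functional $\ell_j$ on $G\Q$, the evaluation
\[
\ell_j(p_t)=\sum_{i=1}^{r}\ell_j(\beta_i)\,t^{\,i-1}
\]
is a polynomial in $t$ of degree at most $r-1$; it is not identically zero, since $\ell_j$ is nonzero and $\beta_1,\dots,\beta_r$ form a basis. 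Hence $p_t\in V_j$ for at most $r-1$ values of $t$, so $p_t\in V_1\cup\dots\cup V_m$ for at most $m(r-1)$ values of $t$ in total. As there are infinitely many positive integers $t$, some $p_t$ avoids the union, contradicting $p_t\in N$.

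The only real obstacle is this second step, and it is overcome by placing the points $p_t$ on the moment curve: the Vandermonde-type nonvanishing guarantees that no hyperplane can absorb more than $r-1$ of them, so no finite family of $(r-1)$-dimensional subspaces can cover the infinite family $\{p_t\}$. Combining the two steps, $N$ satisfies both clauses of the definition and is therefore of rank $\dim G\Q$.
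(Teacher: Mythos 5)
Your proof is correct and follows essentially the same route as the paper: both place a family of points $\sum_i t^{i-1}\beta_i$ (the paper uses $\sum_i k^i\gamma_i$) on a moment curve inside $G\N$ and use the Vandermonde nonvanishing to conclude that each $(r-1)$-dimensional subspace can absorb at most $r-1$ of them, so no finite union covers $N$. Your root-counting phrasing of the Vandermonde step and the explicit verification that $N\Q=G\Q$ are only cosmetic differences.
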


\begin{proof} Let $r$ denote $\dim G\Q$. Since $\dim N\Q=r$ by hypotheses, it remains to show that $N$ is not covered by a finite number of spaces with the dimension less than $r$. It is enough to prove that $G\N$ is not covered so.

Consider the set $M=\{(k,k^2,\dots,k^{r}) \tz k\in\N\}\subset\N^r$ and a linearly independent subset $\{\gamma_1,\dots,\gamma_r\}$ of the set $G$.
Each $r$ distinct elements of $M$ are linearly independent in $\Q^r$ since they form a Vandermonde matrix. Therefore also each $r$ distinct elements of the set
\[M_G=\left\{\sum_{i=1}^r k^i\gamma_i \tz k\in \N\right\}\]
are linearly independent. The claim follows since $M_G\subset G\N$ and since any subspace of $\Q^n$ with the dimension less than $r$ contains at most $r-1$ elements of $M_G$.
\end{proof}

The following consequence of the previous lemma relates the rank of a subset of $\Z^n$ with the rank of a morphism.

\begin{lemma}\label{ranks}
If $h:\Xi^*\to \{a_1,a_2,\dots,a_k\}^*$  is a morphism of rank $r$, then the set  $L_h$ is of rank $r$. 
\end{lemma}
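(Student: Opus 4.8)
The plan is to recognize the set $L_h$ concretely as the set of nonnegative integer combinations of the vectors of $G_h$, and then to feed this description into Lemma~\ref{rank}.

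First I would linearize the length type of $\vartheta_\alpha\circ h$. For $\alpha\in\N_0^k$ the endomorphism $\vartheta_\alpha$ replaces each letter $a_i$ by $a_i^{(\alpha)_i}$, so it scales the number of occurrences of $a_i$ by $(\alpha)_i$ and leaves the count of every other letter untouched. Thus $|\vartheta_\alpha(h(x_j))|=\sum_{i=1}^k(\alpha)_i\,|h(x_j)|_{a_i}$ for each $j$, and collecting these equalities over $j=1,\dots,n$ gives \[L(\vartheta_\alpha\circ h)=\sum_{i=1}^k(\alpha)_i\,\gamma(h)_i.\] Hence $L_h$ is precisely the set of nonnegative integer combinations of the vectors in $G_h$. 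This identity, although elementary, is the heart of the argument: it is the step that transports the combinatorial set $L_h$ into the purely linear-algebraic setting in which Lemma~\ref{rank} is stated, so the care must go into getting this linearization right.

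With this description of $L_h$, I would verify the hypotheses of Lemma~\ref{rank} taking $G=G_h$ and $N=L_h$. Each $\gamma(h)_i$ lies in $\N_0^n\subset\Z^n$, so every element of $L_h$, being an integer combination of these, lies in $\Z^n$. The strictly positive integer combinations, that is the elements of $G_h\N$, are among the combinations defining $L_h$, so $G_h\N\subseteq L_h$; and every element of $L_h$ is in particular a rational combination of $G_h$, so $L_h\subseteq G_h\Q$. The chain $G_h\N\subseteq L_h\subseteq G_h\Q$ together with $L_h\subset\Z^n$ is exactly what Lemma~\ref{rank} requires.

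Lemma~\ref{rank} then gives that $L_h$ has rank $\dim G_h\Q$. Since the rank of $h$ is by definition $\dim\Gamma_h=\dim G_h\Q=r$, this is the desired conclusion. I expect no genuine obstacle: once the length-type formula is in place the remaining inclusions are routine, and the real combinatorial difficulty, namely that $L_h$ genuinely spans its full ambient dimension and cannot be covered by finitely many lower-dimensional subspaces, has already been absorbed into Lemma~\ref{rank} through its Vandermonde argument.
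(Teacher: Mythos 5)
Your proof is correct and follows exactly the paper's route: it establishes the chain $G_h\N\subseteq L_h\subseteq G_h\Q$ and invokes Lemma~\ref{rank}, which is precisely the paper's one-line argument. The only difference is that you spell out the linearization $L(\vartheta_\alpha\circ h)=\sum_i(\alpha)_i\,\gamma(h)_i$, which the paper leaves implicit.
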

\begin{proof} 
Since $G_h\N \subset L_h\subset G_h\Q$, the claim follows from Lemma \ref{rank}.
\end{proof}

The last two lemmas of this section investigate several properties of rank, which
will serve as a useful tool in the next section devoted to polynomial description of world equations.

\begin{lemma}\label{U-rank} If $\mathcal L=\bigcup_{i\le k}\mathcal L_i\subseteq \Q^n$ is a set of rank $r$
such that $\mathcal L_i\N\subseteq\mathcal L_i$ for each $i$, then there exists $i$ such that $\mathcal L_i$ is of rank $r$.
\end{lemma}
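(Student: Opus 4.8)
The plan is to argue by contradiction, after first using the closure hypothesis to pin down the rank of each piece via Lemma~\ref{rank}. Set $V=\mathcal L\Q$, so $\dim V=r$ and, since each $\mathcal L_i\subseteq\mathcal L$, every $\mathcal L_i\Q$ is a subspace of $V$ of dimension at most $r$. The key preliminary observation is that each $\mathcal L_i$ actually \emph{has} a well-defined rank, equal to $\dim\mathcal L_i\Q$. To see this I would choose, for each $i$, a maximal linearly independent subset $G_i\subseteq\mathcal L_i$, so that $G_i\Q=\mathcal L_i\Q$. Then the assumption $\mathcal L_i\N\subseteq\mathcal L_i$ gives the chain $G_i\N\subseteq\mathcal L_i\N\subseteq\mathcal L_i\subseteq\mathcal L_i\Q=G_i\Q$, which is exactly the hypothesis of Lemma~\ref{rank} with $G=G_i$ and $N=\mathcal L_i$. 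Hence Lemma~\ref{rank} yields that $\mathcal L_i$ is of rank $\dim G_i\Q=\dim\mathcal L_i\Q$. This is the only step that consumes the closure hypothesis: its sole purpose is to upgrade ``dimension of the spanned space'' to ``rank'' for each individual $\mathcal L_i$.

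Next I would suppose, for contradiction, that no $\mathcal L_i$ has rank $r$. By the previous paragraph this means $\dim\mathcal L_i\Q\neq r$, and since $\dim\mathcal L_i\Q\le r$ we get $\dim\mathcal L_i\Q<r$ for every $i$. Thus each $\mathcal L_i\Q$ is a proper subspace of $V$, and any such subspace of dimension $<r$ is contained in some $(r-1)$-dimensional subspace $H_i$ of $V$ (extend a basis of $\mathcal L_i\Q$ to a basis of $V$ and discard one of the added vectors). Therefore $\mathcal L=\bigcup_{i\le k}\mathcal L_i\subseteq\bigcup_{i\le k}H_i$, which exhibits $\mathcal L$ as covered by finitely many $(r-1)$-dimensional subspaces of $\mathcal L\Q=V$. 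This directly contradicts the assumption that $\mathcal L$ is of rank $r$, and the lemma follows.

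The argument is short, and there are no computations to grind through; the whole content is conceptual. I expect the only point needing care to be the dichotomy that Lemma~\ref{rank} makes available: once each $\mathcal L_i$ is known to have rank equal to $\dim\mathcal L_i\Q$, ``no piece has rank $r$'' is \emph{equivalent} to ``every piece spans a proper subspace of $V$,'' and it is precisely this equivalence that turns the failure of the conclusion into a finite covering of $\mathcal L$ by hyperplanes. Without the closure hypothesis a piece of full dimension $r$ could still be thin enough to be covered by finitely many $(r-1)$-dimensional subspaces, so it is worth flagging that the hypothesis is exactly what rules this out.
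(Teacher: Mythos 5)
Your proof is correct and follows essentially the same route as the paper's: both reduce to Lemma~\ref{rank} via the inclusion chain $G_i\N\subseteq\mathcal L_i\N\subseteq\mathcal L_i\subseteq G_i\Q$, and both rest on the observation that if every $\mathcal L_i\Q$ were a proper subspace of $\mathcal L\Q$ then $\mathcal L$ would be covered by finitely many $(r-1)$-dimensional subspaces, contradicting its rank. The only difference is presentational: the paper argues directly that some $i$ satisfies $\dim\mathcal L_i\Q=r$ and applies Lemma~\ref{rank} once to that piece, whereas you apply it to every piece and then phrase the covering step as a contradiction.
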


\begin{proof}
Since $\mathcal L\subseteq\bigcup_{i\le k}\mathcal L_i\Q$, there exists $i$ such that $\dim \mathcal L_i\Q=r$.
Thus $\mathcal L_i$ contains a linearly independent set $\gamma_1,\dots,\gamma_{r}$. By the hypothesis $\mathcal L_i\N\subseteq\mathcal L_i$,
hence $\mathcal L_i$ is of rank $r$ by Lemma~\ref{rank}.
\end{proof}

If $\lambda\in \Z^n\setminus \{0\}$, let denote the set $\{\alpha\in \Z^n|\ \lambda\cdot\alpha=0\}$ by $\mathcal N(\lambda)$
and put $\mathcal N(\lambda)^+=\mathcal N(\lambda)\cap\N^n$.

\begin{lemma}\label{rank n-1} Let $\lambda\in \Z^n\setminus \{0\}$. 
\begin{enumerate}
\item[(1)] $\mathcal N(\lambda)$ is of rank $n-1$,
\item[(2)] $\mathcal N(\lambda)^+$ is of rank $n-1$ whenever $\pos\lambda\ne 0 \ne \zap\lambda$.
\end{enumerate}
\end{lemma}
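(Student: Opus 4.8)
The plan is to deduce both statements from Lemma~\ref{rank}. For each of the two sets I will exhibit a finite subset $G$ contained in it, consisting of integer points that solve $\lambda\cdot\alpha=0$, with $\dim G\Q=n-1$; then, writing $N$ for the set in question, the hypotheses $G\N\subseteq N\subseteq G\Q$ of Lemma~\ref{rank} hold and give at once that $N$ has rank $\dim G\Q=n-1$. The two inclusions are routine in each case: $G\N\subseteq N$ because nonnegative integer combinations of solutions are again solutions (and remain nonnegative in case (2)), while $N\subseteq G\Q$ because $G\Q$ will already be the full hyperplane $H=\{\alpha\in\Q^n : \lambda\cdot\alpha=0\}$, which contains $N$. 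Thus everything reduces to producing, inside each set, enough independent solutions to span $H$.

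For part (1) this is easy. Fixing an index $p$ with $\lambda_p\ne 0$, the $n-1$ integer vectors $\beta_i=\lambda_p e_i-\lambda_i e_p$ (for $i\ne p$) all lie in $\mathcal N(\lambda)$, since $\lambda\cdot\beta_i=\lambda_p\lambda_i-\lambda_i\lambda_p=0$, and they are linearly independent because their $e_i$-components equal $\lambda_p\ne 0$ and are supported on distinct coordinates. Hence $G=\{\beta_i : i\ne p\}\subseteq\mathcal N(\lambda)$ has $\dim G\Q=n-1$, so $G\Q=H$, and Lemma~\ref{rank} applies.

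Part (2) is the substantive case, and the main obstacle is to build $n-1$ independent solutions that are moreover nonnegative; here the hypothesis $\pos\lambda\ne 0\ne\zap\lambda$ is exactly what is needed. Partition the coordinates as $P=\{i:\lambda_i>0\}$, $M=\{j:\lambda_j<0\}$, $Z=\{k:\lambda_k=0\}$, so that $P,M\ne\emptyset$. For $k\in Z$ the vector $e_k$ is a nonnegative solution, and for $i\in P$, $j\in M$ the vector $v_{ij}=(-\lambda_j)e_i+\lambda_i e_j$ has positive entries in positions $i,j$ and satisfies $\lambda\cdot v_{ij}=-\lambda_j\lambda_i+\lambda_i\lambda_j=0$, so $v_{ij}\in\mathcal N(\lambda)^+$. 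I then claim these vectors span $H$: the $e_k$ ($k\in Z$) contribute the $|Z|$-dimensional coordinate block indexed by $Z$; within the block indexed by $P\cup M$, fixing $j_0\in M$ makes $\{v_{ij_0}:i\in P\}$ independent (distinct nonzero $e_i$-components), giving $|P|$ dimensions, and each further $j\in M\setminus\{j_0\}$ adds one new direction through some $v_{i_0 j}$, the only one carrying a nonzero $e_j$-component, giving $|M|-1$ more. The span therefore has dimension $|Z|+|P|+(|M|-1)=n-1$, and since it lies in $H$ with $\dim H=n-1$, it is exactly $H$.

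Choosing a maximal linearly independent subset $G$ of these nonnegative spanning solutions, we obtain $G\subseteq\mathcal N(\lambda)^+$ with $\dim G\Q=n-1$, hence $G\Q=H$. The inclusions $G\N\subseteq\mathcal N(\lambda)^+\subseteq G\Q$ hold as explained above, so Lemma~\ref{rank} yields rank $n-1$. I expect the only genuinely delicate point to be the spanning count, and it is precisely where the sign hypothesis enters: if, say, $\zap\lambda=0$, then every nonnegative solution is supported on $Z$, the nonnegative solutions span merely the $|Z|$-dimensional coordinate block, and the rank of $\mathcal N(\lambda)^+$ drops below $n-1$.
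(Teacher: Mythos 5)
Your overall strategy --- exhibit inside each set a family $G$ of $n-1$ linearly independent integer solutions and invoke Lemma~\ref{rank} via $G\N\subseteq N\subseteq G\Q$ --- is exactly the paper's, and your part (1) is a correct, more explicit instance of it. Part (2), however, has a gap that comes from reading $\mathcal N(\lambda)^+=\mathcal N(\lambda)\cap\N^n$ as the set of \emph{nonnegative} integer solutions. The paper reserves $\N_0$ for the nonnegative integers, so $\N$ is $\{1,2,\dots\}$ and $\mathcal N(\lambda)^+$ consists of the \emph{strictly positive} integer solutions; this is not cosmetic, since where the lemma is applied (the proof of Theorem~\ref{factors}) one needs a strictly positive $t\in\N^n$ so that a small perturbation stays in $\Q_+^n$ and so that $t\cdot\alpha=t\cdot\alpha'$ together with $\alpha\le\alpha'$ forces $\alpha=\alpha'$. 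Your spanning vectors $e_k$ ($k\in Z$) and $v_{ij}=(-\lambda_j)e_i+\lambda_ie_j$ all have zero coordinates as soon as $n\ge 3$, so the claim ``$v_{ij}\in\mathcal N(\lambda)^+$'' is false under the paper's convention and the inclusion $G\subseteq\mathcal N(\lambda)^+$ fails. A symptom of the misreading is your closing remark: with the nonnegative interpretation the hypothesis $\pos\lambda\ne 0\ne\zap\lambda$ would not even be necessary (for $\lambda=e_1$ the nonnegative solutions still have rank $n-1$), whereas with the correct one $\mathcal N(\lambda)^+=\emptyset$ whenever $\zap\lambda=0$.

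The repair is short and is exactly the paper's argument; it uses the sign hypothesis in the same way you do, but one step further. Since $P\ne\emptyset\ne M$, the sum of all your $e_k$ and $v_{ij}$ is a \emph{strictly} positive integer solution $\mathbf v$ (every $P$-, $M$- and $Z$-coordinate of the sum is positive). Now take any integer basis $\gamma_1,\dots,\gamma_{n-1}$ of the hyperplane $H=\{\alpha\in\Q^n:\lambda\cdot\alpha=0\}$ and pass to $G=\{\gamma_1+c\mathbf v,\dots,\gamma_{n-1}+c\mathbf v\}$ for a large $c\in\N$: for $c$ large enough all entries are positive, and all but at most one value of $c$ keeps the family linearly independent (writing $\mathbf v=\sum_ib_i\gamma_i$, the transition matrix has determinant $1+c\sum_ib_i$). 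Then $G\subset\mathcal N(\lambda)^+$, $G\N\subseteq\mathcal N(\lambda)^+$, and $G\Q=H\supseteq\mathcal N(\lambda)^+$, so Lemma~\ref{rank} applies. With this one correction your argument for (2) goes through; the rest of what you wrote, including the dimension count $|Z|+|P|+|M|-1=n-1$, is sound.
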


\begin{proof} (1)  Since $\mathcal N(\lambda)$ contains the set $G\N$ for every base $G\subseteq\Z^n$
of the $(n-1)$\/-dimensional vector space $\{\mathbf u\in \Q^n|\ \lambda\cdot\mathbf u=0\}$, the assertion follows from Lemma~\ref{rank}.

(2) As $\pos\lambda\ne 0 \ne \zap\lambda$, there exists a positive $\mathbf v$ such that $\lambda\cdot\mathbf v=0$. Hence for every
base $\gamma_1,\dots, \gamma_{n-1}\in \Z^n$
of the vector space $\{\mathbf u\in \Q^n|\ \lambda\cdot\mathbf u=0\}$ there is $c\in\N$ such that 
$G=\{\gamma_1+c\mathbf v,\dots, \gamma_{n-1}+c\mathbf v\}\subset \N^n$ is a linearly independent set.
Since $G\N\subseteq \NUL(\lambda)^+$, it remains to apply Lemma~\ref{rank}.
\end{proof}

\section{Word equations and polynomials}

We review the crucial idea of A. Saarela from \cite{Saarela14} that allows to employ linear algebra.
As it is based on expressing word equations by polynomials, we recall needed notions from polynomial algebra.
$\Q(x)$ denotes the field of fractions of the polynomial ring $\Z[x]$, and $\Q(\X)$ the field of fractions of the polynomial ring $\Z[\X]=\Z[X_1,X_2,\dots,X_n]$.
Let $\gamma\in\N_0^n$. We denote by $\X^\alpha$ the monomial $\prod_{i=1}^nX_i^{(\alpha)_i}\in \Z[\X]$ and
by $\Omega_\gamma:\Z[\X]\to \Z[x]$  the evaluation homomorphism defined by $\Omega_\gamma: X_i\mapsto x^{(\gamma)_i}$, that is, $$\Omega_\gamma\left(p(X_1,\dots,X_n)\right)=p(x^{(\gamma)_1},\dots,x^{(\gamma)_n})\,.$$
In order to simplify the notation, we shall write $p(\gamma)$ instead of $\Omega_\gamma(p)$.

If we choose (a subset of) $\Z$ as the alphabet, then there is a natural representation of a word $w=a_0a_1\cdots a_k$ by the polynomial
$P(w)=\sum_{i=1}^k a_ix^i\in \Z[x]$. In this representation, there is an ambiguity caused by trailing zeros. This can be avoided by using only nonzero digits, or by specifying the length of the word.

Representation of an equation is less obvious. 
Let 
\begin{align}\label{rovnice}
	E= (x_{i_1}x_{i_2}\dots x_{i_r}\,,\,x_{j_1}x_{j_2}\dots x_{j_s})\tag{$*$}
\end{align}
be an equation in $n$ unknowns $\Xi=\{x_1,x_2,\dots,x_n\}$.
Then we define
\[
S_{E,x_j}=\sum_{a: i_a=j}\prod_{t=1}^{a-1}X_{i_t}-\sum_{a: j_a=j}\prod_{t=1}^{a-1}X_{j_t}\in \Z[\X]
\]
(where the empty product is equal to 1).
Comparing this definition with the one given in \cite{Saarela14},
note that, in order to exploit properties of multivariate polynomials,
we work in the usual polynomial ring $\Z[\X]$ instead of the (isomorphic) monoid ring $\Z[X;\mathcal M]$.

If $E$ and $E'$ are two equations, we will be interested in determinants
\[
t_{jk}^{E,E'}:=S_{E,x_j}S_{E',x_k}-S_{E',x_j}S_{E,x_k}\,.
\]

Given a length type $\beta\in \N_0^n$, we obtain the polynomial $S_{E,x_j}(\beta)=\Omega_\beta(S_{E,x_j})\in \Z[x]$.
We define
\begin{align*}
\SE_E&:=\left(S_{E,x_1},S_{E,x_2},\dots,S_{E,x_n}\right)\in  \Z[\X]^n,\\
\SE_E(\beta)&:=\left(S_{E,x_1}(\beta),S_{E,x_2}(\beta),\dots,S_{E,x_n}(\beta)\right)\in  \Z[x]^n\,.
\end{align*}
If $h: \Xi^* \to \Z^*$ is a word homomorphism (that is, $\Z^*$ is understood as a free monoid over the alphabet $\Z$), then denote
\[\PE(h)=\left(P(h(x_1)),P(h(x_2)),\dots,P(h(x_n))\right)\in  \Z[x]^n\,.\]

The point of these definitions is that $h$, with the length type $L=L(h)$, is a solution of $E$ if and only if
\[\SE_E(L)\cdot\PE(h)\,=0,\]
that is, if $\PE(h)$ is a solution of the homogeneous linear equation $\SE_E(L)$. The claim is verified by a straightforward check of definitions.

Trivial cases are described in the following lemma. 
\begin{lemma}\label{nuly}\ 
\begin{enumerate}
	\item $\SE_E=0$ if and only if $E$ is trivial.\label{nulajedna}
	\item If $\SE_E\neq 0$ and $\SE_E(\beta)=0$, then $(\beta)_i=0$ for at least two coordinates $i$.\label{nuladva}
\end{enumerate}
\end{lemma}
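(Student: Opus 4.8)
The plan is to prove both parts from a single combinatorial reconstruction argument that reads off the word structure of $E$ from the exponents appearing in the one-variable polynomials $S_{E,x_j}(\beta)$. The forward implication of part (1) is immediate: if $E$ is trivial then $u=v$ as words, so $r=s$ and $i_a=j_a$ for every $a$, whence the plus-sum and the minus-sum defining each $S_{E,x_j}$ coincide term by term and $S_{E,x_j}=0$. The substance of the lemma lies in the two converse-type statements, and for these I would first record the explicit form of the evaluated polynomials: writing $p^u_{a}=\sum_{t=1}^{a}(\beta)_{i_t}$ and $p^v_{a}=\sum_{t=1}^{a}(\beta)_{j_t}$ for the $\beta$-lengths of the prefixes of $u$ and $v$, the homomorphism $\Omega_\beta$ sends $\prod_{t=1}^{a-1}X_{i_t}$ to $x^{p^u_{a-1}}$, so that $S_{E,x_j}(\beta)=\sum_{a:i_a=j}x^{p^u_{a-1}}-\sum_{a:j_a=j}x^{p^v_{a-1}}$.

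The core claim I would isolate is the following reconstruction principle: \emph{if $\beta$ has at most one zero coordinate and $\SE_E(\beta)=0$, then $u=v$.} The idea is that $S_{E,x_j}(\beta)=0$ is an equality of polynomials with nonnegative coefficients, hence an equality of multisets of exponents; thus for each letter $x_j$ the multiset of starting positions of the occurrences of $x_j$ in $u$ equals that in $v$. Let $x_1$ be the unique letter of possibly-zero width (there is at most one; if all widths are positive this letter is simply absent from the discussion). For the letters $x_j$ of positive width the starting positions of distinct occurrences are strictly separated, so these multisets are genuine sets, and comparing them for all such $j$ forces the subsequences of positive-width letters of $u$ and $v$ to coincide, occurrence by occurrence and position by position. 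This fixes a common ``skeleton'' $y_1\cdots y_m$ whose gaps sit at the distinct positions $P_0=0<P_1<\cdots<P_m$. Finally, since an occurrence of the zero-width letter $x_1$ lying in gap $b$ contributes the exponent $P_b$, the multiset equality for $j=1$ is exactly the statement that each gap contains the same number of $x_1$'s in $u$ as in $v$; as the $P_b$ are pairwise distinct this pins down the distribution of $x_1$ completely, giving $u=v$.

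With the reconstruction principle in hand both parts follow quickly. For the remaining direction of part (1), apply it to $\beta=(1,1,\dots,1)$, which has no zero coordinate: since $\SE_E=0$ every component $S_{E,x_j}$ is the zero polynomial, so $S_{E,x_j}(\beta)=0$ and $\SE_E(\beta)=0$, whence $u=v$ and $E$ is trivial. For part (2) I argue by contraposition: if $\SE_E(\beta)=0$ but $\beta$ had at most one zero coordinate, the reconstruction principle would give $u=v$, i.e.\ $E$ trivial, whence $\SE_E=0$ by part (1); so under the assumption $\SE_E\neq0$ the vector $\beta$ must have at least two zero coordinates. The step I expect to require the most care is the bookkeeping around the zero-width letter in the reconstruction principle: it is precisely the possibility of several zero-width occurrences sharing a starting position that makes the multiset (rather than set) formulation necessary, and it is the need to keep the gap-positions $P_b$ pairwise distinct---which fails as soon as two coordinates vanish---that explains why exactly two zero coordinates are unavoidable in part (2). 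I would also dispose at the outset of the degenerate cases in which $u$ or $v$ is empty, by directly inspecting the constant terms of the $S_{E,x_j}$.
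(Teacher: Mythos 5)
Your proof is correct, but it takes a genuinely different route from the paper's. The paper argues locally: assuming $E$ nontrivial, it locates the first position $k$ at which the two sides of $E$ differ (or sets $k=r+1$ when one side is a proper prefix of the other), notes that the monomial contributed at that position to $S_{E,x_{j_k}}$ survives all cancellation, and then checks that after evaluating at $\beta$ this term can only be cancelled if $(\beta)_{i_k}=0$ (and, symmetrically for $S_{E,x_{i_k}}$, only if $(\beta)_{j_k}=0$). This settles both parts at once and even identifies the two coordinates that must vanish, namely those of the two distinct letters at the first point of disagreement. You instead prove a global reconstruction principle --- a length type with at most one zero coordinate annihilating $\SE_E$ forces $u=v$ --- by reading each equality $S_{E,x_j}(\beta)=0$ as an equality of multisets of weighted starting positions, recovering the common skeleton of positive-width letters and then the per-gap counts of the single zero-width letter; parts (1) and (2) then follow formally by evaluating at $(1,\dots,1)$ and by contraposition. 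Your route is longer and needs more bookkeeping, but it makes transparent exactly which information about $E$ the evaluated tuple $\SE_E(\beta)$ retains; the paper's first-difference argument is shorter and pins down the vanishing coordinates explicitly. One quibble with a side remark only: the gap positions $P_0<\dots<P_m$ are partial sums of positive widths and hence stay pairwise distinct even when two coordinates of $\beta$ vanish; what actually breaks with two zero-width letters is that per-gap counts no longer determine the word (e.g.\ $u=x_1x_2$, $v=x_2x_1$ with $(\beta)_1=(\beta)_2=0$). This does not affect your proof, which invokes the reconstruction principle only in the case of at most one zero coordinate.
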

\begin{proof} We shall use the notation of \eqref{rovnice}. The proof consists in verifying the following claims directly from definitions.

If $E$ is trivial, then  $\SE_E=0$. Let now $E$ be nontrivial and suppose w.l.o.g. that $r\leq s$. Let $k\geq 1$ be the smallest integer such that ${i_k}\neq j_k$ or, if the left side of $E$ is a prefix of the right side, $k=r+1$.
Then $S_{E,x_{j_k}}\neq 0$. 

If $k=r+1$, then also $S_{E,x_{j_k}}(\beta)\neq 0$ for any $\beta$. If ${i_k}\neq j_k$ and $(\beta)_{i_k}\neq 0$, then $S_{E,x_{j_k}}(\beta)\neq 0$. Similarly, $S_{E,x_{i_k}}(\beta)\neq 0$ if $(\beta)_{j_k}\neq 0$.
\end{proof}

The following lemma is based on the observation that a solution of rank $n-1$ in fact represents $n-1$ linearly independent solutions.

\begin{lemma}\label{repre}
Let $h:\Xi^*\to \{a_1,a_2,\dots,a_k\}^*$ be a solution of rank $n-1$ of equations $E$ and $E'$, and let $\alpha\in \N^k$. Then $\SE_E(L(\vartheta_\alpha\circ h))$ and $\SE_{E'}(L(\vartheta_\alpha\circ h))$ are linearly dependent over $\Q(x)$.
\end{lemma}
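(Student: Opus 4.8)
The plan is to reduce the statement to a dimension count over the field $\Q(x)$, extracting from the rank-$(n-1)$ hypothesis enough vectors orthogonal to both $\SE_E(L)$ and $\SE_{E'}(L)$ to force their dependence. First I would record that $g:=\vartheta_\alpha\circ h$ is again a solution of both $E$ and $E'$: applying the letter-power morphism $\vartheta_\alpha$ to the equal words $h(u)=h(v)$ preserves the equality. Moreover $g$ again has rank $n-1$, since $\gamma(g)_i=(\alpha)_i\,\gamma(h)_i$ with $(\alpha)_i\ge 1$, so $G_g$ spans the same hyperplane $\Gamma_h$ as $G_h$. Put $L=L(g)$. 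Recalling that two vectors of $\Q(x)^n$ are linearly dependent exactly when all their $2\times2$ minors vanish, the claim is equivalent to $t_{jk}^{E,E'}(L)=0$ for all $j,k$; that is, to the $2\times n$ matrix with rows $\SE_E(L)$ and $\SE_{E'}(L)$ having rank at most one over $\Q(x)$.

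The heart of the argument is to split the polynomial encoding letter by letter. For each letter $a_i$ let $P_i(w)\in\Z[x]$ be obtained from $P$ by sending $a_i\mapsto 1$ and every other letter to $0$; thus $P_i(w)$ records the positions of $a_i$ in $w$ and, like $P$ itself, satisfies the shift rule $P_i(w_1w_2)=P_i(w_1)+x^{|w_1|}P_i(w_2)$. Set $\PE^{(i)}(g)=(P_i(g(x_1)),\dots,P_i(g(x_n)))\in\Z[x]^n$. The very computation that verifies the basic identity $\SE_E(L)\cdot\PE(g)=0$ uses only this shift rule, so it applies verbatim with $P$ replaced by $P_i$ and yields $\SE_E(L)\cdot\PE^{(i)}(g)=P_i(g(u))-P_i(g(v))$, where $E=(u,v)$. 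Since $g$ is a solution, $g(u)=g(v)$ as words, so $P_i(g(u))=P_i(g(v))$ and $\SE_E(L)\cdot\PE^{(i)}(g)=0$; likewise $\SE_{E'}(L)\cdot\PE^{(i)}(g)=0$. (Equivalently: assign a formal value $c_i$ to each letter, observe that $\PE=\sum_i c_i\,\PE^{(i)}$ is linear in $c$ and that the solution identity holds for every $c$, so each coefficient must vanish.)

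It remains to see that among $\PE^{(1)}(g),\dots,\PE^{(k)}(g)$ there are $n-1$ that are linearly independent over $\Q(x)$. Here I would evaluate at $x=1$: since $P_i(g(x_j))|_{x=1}=|g(x_j)|_{a_i}=(\gamma(g)_i)_j$, the specialization sends $\PE^{(i)}(g)\mapsto\gamma(g)_i$, and the vectors $\gamma(g)_i$ span the hyperplane $\Gamma_g$ of dimension $n-1$. As linear independence of the specializations at $x=1$ forces linear independence of the polynomial vectors over $\Q(x)$, the span $W=\langle\PE^{(1)}(g),\dots,\PE^{(k)}(g)\rangle\subseteq\Q(x)^n$ has $\dim_{\Q(x)}W\ge n-1$. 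Both $\SE_E(L)$ and $\SE_{E'}(L)$ lie in $W^{\perp}$, and since the standard bilinear form on $\Q(x)^n$ is nondegenerate, $\dim W^{\perp}=n-\dim W\le 1$; hence $\SE_E(L)$ and $\SE_{E'}(L)$ are linearly dependent over $\Q(x)$, as required. These are precisely the ``$n-1$ linearly independent solutions'' carried by a rank-$(n-1)$ solution, one per independent letter.

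The one step needing care — and the main obstacle — is the letter-by-letter splitting: one must check that the encoding identity $\SE_E(L)\cdot\PE(g)=P(g(u))-P(g(v))$ is formal in the letter-values, so that it survives when $P$ is replaced by each indicator $P_i$. Everything else is either bookkeeping (that $\vartheta_\alpha$ preserves being a solution and preserves the rank) or standard linear algebra (the minor criterion, the passage from independence of specializations at $x=1$ to independence over $\Q(x)$, and the orthogonal-complement dimension count).
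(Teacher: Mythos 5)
Your proposal is correct and follows essentially the same route as the paper: your indicator polynomials $P_i$ are exactly the paper's $P\circ\tau_i$, where $\tau_i$ sends $a_i\mapsto 1$ and the other letters to $0$, so that $\tau_i\circ\vartheta_\alpha\circ h$ is again a solution with the same length type, and the identity $\SE_E(L)\cdot\PE^{(i)}=0$ is immediate. The rest — evaluation at $x=1$ giving the spanning vectors $(\alpha)_i\gamma(h)_i$ of the hyperplane $\Gamma_h$, hence $n-1$ independent vectors over $\Q(x)$, and the orthogonal-complement dimension count — matches the paper's argument step for step.
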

\begin{proof}
Let  $h_i=\tau_i\circ \vartheta_\alpha\circ h$, $i=1,2,\dots,k$, where $\tau_i:\alp(h)^*\to \{0,1\}^*$ is defined by $\tau_i(a_i)=\delta_{ij}$. Clearly, each $h_i$ is a solution of both $E$ and $E'$. Observe that the evaluation $x\mapsto 1$ applied to $\PE(h_i)$ yields $(\alpha)_i\gamma(h)_i$. Since such an evaluation yields $n-1$ linearly independent vectors, we deduce that also among $\PE(h_1),\PE(h_2),\dots, \PE(h_{k})$ there are (at least) $n-1$ linearly independent vectors from $\Q(x)^n$.
Since
\[\SE_E(L(\vartheta_\alpha\circ h))\cdot \PE(h_i)=\SE_{E'}(L(\vartheta_\alpha\circ h))\cdot \PE(h_i)=0\]
for each $i=1,2,\dots,n-1$, the proof is completed.
\end{proof}

As we need to know more about polynomials $t_{jk}^{E,E'}$ we formulate several straightforward observations on multivariate polynomials. 

\begin{lemma}\label{evaluation} Let $\alpha,\beta\in \N_0^n$, $c\in \N$ and $\gamma\in \N_0^n$. 
Then:
\begin{enumerate}
\item\label{direct0} $\X^\alpha(\gamma)=x^{\alpha\cdot\gamma}$,
\item\label{direct1}
 $[\X^\alpha-\X^\beta](\gamma)=x^{\beta\cdot\gamma}\left(x^{(\alpha-\beta)\cdot\gamma}-1\right)$ in $\Q(x)$,
\item\label{direct2}
 $[\X^\alpha-\X^\beta](\gamma)=0$ if and only if $(\alpha-\beta)\cdot\gamma=0$,
\item \label{direct3} $\X^{c\alpha}-\X^{c\beta} = (\X^\alpha-\X^\beta)\sum_{i=0}^{c-1}\X^{i\alpha+(c-1-i)\beta}$.
\end{enumerate}
\end{lemma}

\begin{proof} 
(1) It follows immediately from definitions.

(2) The equality is a result of a direct computation:
\[
[\X^{\alpha}-\X^{\beta}](\gamma)=x^{\alpha\cdot\gamma}-x^{\beta\cdot\gamma}=x^{\beta\cdot\gamma}\left(x^{\alpha\cdot\gamma-\beta\cdot\gamma}-1\right)=x^{\beta\cdot\gamma}(x^{(\alpha-\beta)\cdot\gamma}-1).
\]

(3) As $\Q(x)$ is a field, $x^{\beta\cdot\gamma}(x^{(\alpha-\beta)\cdot\gamma}-1)=0$ if and only if $x^{(\alpha-\beta)\cdot\gamma}=1$, which is equivalent to 
$(\alpha-\beta)\cdot\gamma=0$. Now, it remains to apply \eqref{direct1}.

(4) An easy computation.
\end{proof}

We say that $\lambda\in \Z^n$ has {\it coprime coefficients} whenever $\gcd_{i\le n}((\lambda)_i)=1$.

\begin{lemma}\label{div} 
Let $\lambda\in\Z^n\setminus \{0\}$ have coprime coefficients and $\{\gamma_1,\dots, \gamma_{n-1}\}\subseteq \NUL(\lambda)$ be linearly independent
in the rational vector space $\Q^n$. If $\alpha,\beta\in \N_0^n$ such that $[\X^{\alpha}-\X^{\beta}](\gamma_i)=0$ for each $i=1,\dots,n-1$,
then $(\X^{\pos\lambda}-\X^{\zap\lambda})\deli(\X^{\alpha}-\X^{\beta})$.
\end{lemma}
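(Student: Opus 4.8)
The plan is to reduce the divisibility claim to a purely linear-algebraic statement about the exponent vector $\alpha-\beta$, and then to conclude by the factorization identity in Lemma~\ref{evaluation}\eqref{direct3}.

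First I would rewrite the hypothesis. By Lemma~\ref{evaluation}\eqref{direct2}, the condition $[\X^{\alpha}-\X^{\beta}](\gamma_i)=0$ is equivalent to $(\alpha-\beta)\cdot\gamma_i=0$ for each $i=1,\dots,n-1$. Since $\gamma_1,\dots,\gamma_{n-1}$ are linearly independent and all lie in $\NUL(\lambda)$, they form a basis of the $(n-1)$-dimensional subspace $\{\mathbf u\in\Q^n \mid \lambda\cdot\mathbf u=0\}$. Thus $\alpha-\beta$ is orthogonal to this entire hyperplane, hence lies in its orthogonal complement, the line $\Q\lambda$. Therefore $\alpha-\beta=q\lambda$ for some $q\in\Q$.

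The next step upgrades $q$ to an integer, and this is exactly where coprimality of the coefficients of $\lambda$ enters. Writing $q=a/b$ in lowest terms, the relation $b(\alpha-\beta)=a\lambda$ forces $b\deli a(\lambda)_i$ for every $i$; since $\gcd(a,b)=1$ this gives $b\deli(\lambda)_i$ for all $i$, whence $b\deli\gcd_{i\le n}((\lambda)_i)=1$. So $\alpha-\beta=c\lambda$ for some $c\in\Z$.

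It remains to extract the factor $\Xfactor{\lambda}$. If $c=0$ then $\alpha=\beta$ and the statement is trivial, so assume $c\neq 0$; by exchanging $\alpha$ and $\beta$ if necessary I may take $c>0$. Let $\delta$ be the coordinatewise minimum of $\alpha$ and $\beta$, so that $\alpha-\delta$ and $\beta-\delta$ are the nonnegative parts $\pos{(\alpha-\beta)}$ and $\zap{(\alpha-\beta)}$ with disjoint support. Using $\pos{(c\lambda)}=c\pos\lambda$ and $\zap{(c\lambda)}=c\zap\lambda$, I can write $\X^{\alpha}-\X^{\beta}=\X^{\delta}\bigl(\X^{c\pos\lambda}-\X^{c\zap\lambda}\bigr)$. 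Applying Lemma~\ref{evaluation}\eqref{direct3} with $\pos\lambda$ and $\zap\lambda$ in place of $\alpha$ and $\beta$ then factors $\X^{c\pos\lambda}-\X^{c\zap\lambda}$ as $(\Xfactor{\lambda})\sum_{i=0}^{c-1}\X^{i\pos\lambda+(c-1-i)\zap\lambda}$, which yields the claimed divisibility. I expect the only delicate point to be bookkeeping rather than depth: one must check that the positive/negative decomposition of $c\lambda$ really matches the monomial split $\X^{\alpha}-\X^{\beta}=\X^{\delta}(\X^{c\pos\lambda}-\X^{c\zap\lambda})$, and keep track of the sign change incurred when $c<0$.
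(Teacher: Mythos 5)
Your proof is correct and follows essentially the same route as the paper's: reduce to $(\alpha-\beta)\cdot\gamma_i=0$ via Lemma~\ref{evaluation}\eqref{direct2}, deduce $\alpha-\beta=c\lambda$ with $c\in\Z$ from coprimality, strip off the coordinatewise minimum $\X^\mu$, and finish with Lemma~\ref{evaluation}\eqref{direct3}. The only differences are that you spell out the integrality of $c$ and the degenerate case $\alpha=\beta$ more explicitly, which the paper leaves implicit.
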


\begin{proof} By Lemma~\ref{evaluation}\eqref{direct2}, the equality $(\alpha-\beta)\cdot\gamma_i=0$ holds for each $i<n$. 
Hence
there exists a nonzero rational number $c$ such that $\alpha-\beta=c\lambda$. Note that $c\in\Z$ because $\lambda$ has coprime coefficients and $\alpha-\beta\in \Z^n$. 
By symmetry, we may suppose without loss of generality that $c\in \N$.
Put 
\[\mu=(\min((\alpha)_1,(\beta)_1),\dots, \min((\alpha)_n,(\beta)_n)).\] It is easy to see that
$c\pos\lambda=\alpha-\mu$ and $c\zap\lambda=\beta-\mu$, thus
\[
\X^{\alpha}-\X^{\beta}= (\X^{\alpha-\mu}-\X^{\beta-\mu})\X^\mu= 
(\X^{c\pos\lambda}-\X^{c\zap\lambda})\X^\mu.
\]
Finally, note that $(\X^{\pos\lambda}-\X^{\zap\lambda})\deli(\X^{c\pos\lambda}-\X^{c\zap\lambda})$
by Lemma~\ref{evaluation}\eqref{direct3}.
\end{proof}

The following lemma is a core observation allowing to employ properties of the unique factorization domain $\Z[\X]$.

\begin{lemma}\label{generator} Let $\lambda\in \Z^n\setminus \{0\}$ have coprime coefficients and let $N\subseteq \NUL(\lambda)$ be of rank $n-1$.
Then 
\begin{itemize}
	\item $p\in \Z[X]$ satisfies $p(\gamma)=0$ for all $\gamma \in N$ if and only if $\left(\X^{\pos\lambda}-\X^{\zap\lambda}\right)\deli p$,
\end{itemize}
and
\begin{itemize}
	\item $\X^{\pos\lambda}-\X^{\zap\lambda}$ is irreducible. 
\end{itemize}
\end{lemma}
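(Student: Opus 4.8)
The plan is to prove the two bullets in turn, using Lemma~\ref{evaluation}, Lemma~\ref{div}, and above all the precise notion of rank. For the divisibility characterization, the implication from right to left is routine: writing $p=(\Xfactor{\lambda})q$ and applying the ring homomorphism $\Omega_\gamma$ gives $p(\gamma)=[\Xfactor{\lambda}](\gamma)\cdot q(\gamma)$, and for every $\gamma\in N\subseteq\NUL(\lambda)$ one has $(\pos\lambda-\zap\lambda)\cdot\gamma=\lambda\cdot\gamma=0$, so $[\Xfactor{\lambda}](\gamma)=0$ by Lemma~\ref{evaluation}\eqref{direct2}; hence $p(\gamma)=0$.

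The converse is the substantial part. Write $p=\sum_\alpha c_\alpha\X^\alpha$, group the occurring exponents by the equivalence $\alpha\sim\beta\iff\alpha-\beta\in\Q\lambda$, and fix a representative $\alpha_C$ in each class $C$. For a fixed $\gamma\in N$ every exponent $\alpha$ of $C$ gives the same value $\alpha\cdot\gamma=\alpha_C\cdot\gamma=:d_C(\gamma)$, since $\alpha-\alpha_C\in\Q\lambda$ and $\lambda\cdot\gamma=0$; thus $p(\gamma)=\sum_C s_C\,x^{d_C(\gamma)}$ with $s_C=\sum_{\alpha\in C}c_\alpha$. For distinct classes $C\ne C'$ the linear form $\gamma\mapsto(\alpha_C-\alpha_{C'})\cdot\gamma$ is nonzero on $N\Q=\NUL(\lambda)\Q$ (since $\alpha_C-\alpha_{C'}\notin\Q\lambda$), so its zero set is a subspace of $N\Q$ of dimension $n-2$. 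Since $N$ has rank $n-1$, by definition it is not covered by the finitely many such subspaces, so I can choose a single $\gamma^\ast\in N$ at which all the integers $d_C(\gamma^\ast)$ are pairwise distinct; then $p(\gamma^\ast)=0$ forces every coefficient sum $s_C$ to vanish. The relations $s_C=0$ give $p=\sum_C\sum_{\alpha\in C}c_\alpha(\X^\alpha-\X^{\alpha_C})$, a $\Z$-combination of binomials. Each has $\alpha-\alpha_C\in\Z\lambda$ (integrality from coprimality of $\lambda$), hence vanishes at every point of $\NUL(\lambda)$ by Lemma~\ref{evaluation}\eqref{direct2}; as $N$ provides $n-1$ independent such points, Lemma~\ref{div} gives $(\Xfactor{\lambda})\deli(\X^\alpha-\X^{\alpha_C})$, and summing yields $(\Xfactor{\lambda})\deli p$.

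For irreducibility I would pass to the Laurent ring $\Z[X_1^{\pm1},\dots,X_n^{\pm1}]$, where $\Xfactor{\lambda}=\X^{\zap\lambda}(\X^\lambda-1)$ differs from $\X^\lambda-1$ by the unit $\X^{\zap\lambda}$. Because $\lambda$ has coprime coefficients it is a primitive vector, hence the first column of some $U\in GL_n(\Z)$; the monomial substitution $\X^\mu\mapsto\X^{U^{-1}\mu}$ is then an automorphism of the Laurent ring carrying $\X^\lambda-1$ to $X_1-1$. As the Laurent ring is a localization of the UFD $\Z[\X]$ at the monomials and $X_1-1$ is divisible by no variable, $X_1-1$ stays prime there; consequently $\X^\lambda-1$, and with it $\Xfactor{\lambda}$, is irreducible in the Laurent ring. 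To return to $\Z[\X]$ I note that, because $\pos\lambda$ and $\zap\lambda$ have disjoint supports, no variable divides $\Xfactor{\lambda}$; so in a factorization $\Xfactor{\lambda}=gh$ in $\Z[\X]$ one factor must be a Laurent unit $\pm\X^\mu$ with $\mu\in\N_0^n$, and a nontrivial $\X^\mu$ dividing $\Xfactor{\lambda}$ would force a variable to divide it. Hence $\mu=0$, the factor is $\pm1$, and $\Xfactor{\lambda}$ is irreducible in $\Z[\X]$.

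The step I expect to be the crux is the selection of the single evaluation point $\gamma^\ast$ separating all $\lambda$-classes simultaneously. This is exactly where the strength of ``rank $n-1$'' is needed: the mere equality $\dim N\Q=n-1$ would not rule out that $N$ lies inside the union of the finitely many subspaces $\{d_C=d_{C'}\}$, whereas the definition of rank guarantees a point avoiding all of them, and it is this point that converts the vanishing of $p$ on $N$ into the vanishing of every class-coefficient $s_C$, the key that unlocks Lemma~\ref{div}.
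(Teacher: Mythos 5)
Your proof is correct, and it diverges from the paper's in both halves, more substantially in the second. For the divisibility criterion the paper runs a minimal-counterexample induction: writing $p$ as a signed sum of monomials, it covers $N$ by the sets $N_j$ on which the leading monomial cancels against $\X^{\beta_j}$, finds one $N_j$ containing a spanning set (via Lemma~\ref{U-rank}), strips off the corresponding binomial with Lemma~\ref{div}, and contradicts minimality. You instead partition the exponents into classes modulo $\Q\lambda$ and invoke the rank hypothesis exactly once, to produce a single $\gamma^\ast\in N$ separating all classes; the vanishing $p(\gamma^\ast)=0$ then annihilates every class sum at once, exhibiting $p$ globally as a $\Z$-combination of binomials, each handled by Lemma~\ref{div}. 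The two arguments rest on the same pillars (Lemma~\ref{div} and the non-coverability clause in the definition of rank), but yours dispenses with the induction and isolates cleanly the one place where rank, as opposed to mere dimension, is needed. For irreducibility the paper stays inside its own machinery, covering $\NUL(\lambda)$ by the zero sets of the two putative factors and applying the first bullet; you instead complete the primitive vector $\lambda$ to a basis of $\Z^n$, transport $\X^{\lambda}-1$ to $X_1-1$ by a monomial automorphism of the Laurent ring, and descend to $\Z[\X]$ via the observation that no variable divides $\Xfactor{\lambda}$ (because $\pos\lambda$ and $\zap\lambda$ have disjoint supports). Your route is the standard Laurent-ring argument and is logically independent of the first bullet; the paper's avoids any change of variables but depends on the Nullstellensatz-type statement it has just proved. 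Both are sound.
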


\begin{proof}
In the proof we shall often implicitly use the well known fact that the evaluation mapping $\Omega_\alpha$ is a homomorphism.    
By Lemma~\ref{evaluation}(2), we have $[\X^{\pos\lambda}-\X^{\zap\lambda}](\gamma)=0$ for each $\gamma\in\NUL(\lambda)$, hence
$\left(\X^{\pos\lambda}-\X^{\zap\lambda}\right)\deli p$ implies $p(\gamma)=0$ for each $\gamma\in N$.

Assume to the contrary that there is a polynomial $p$ such that $p(\gamma)=0$ for each $\gamma\in N$ and  $p$ is not divisible by $\left(\X^{\pos\lambda}-\X^{\zap\lambda}\right)$. Fix such a $p$ with minimal possible number of monomials. More precisely, since every nonzero coefficient of an arbitrary polynomial in $\Z[\X]$ is a sum of copies of either $1$ or $-1$,
there exist $s,r\in \N$ and two sequences $(\alpha_i\tz \ i\le s)$, $(\beta_i\tz \ i\le r)$ of elements of $\N_0^n$
such that \[p=\sum_{i\le s} \X^{\alpha_i}-\sum_{i\le r} \X^{\beta_i},\] 
 and we suppose that $s+r$ is minimal among all polynomials contradicting the assertion. For each $j\le r$, put
\[
N_j=\{\gamma\in N\tz [\X^{\alpha_1}-\X^{\beta_{j}}](\gamma)=0\}.
\]
Since $p(\gamma)=0$ for each $\gamma\in N$, we deduce that  $N=\bigcup_{j\le r}N_j$ and there exists a linearly independent set $\{\gamma_1,\dots, \gamma_{n-1}\}$ in $N_j$ for some $j$ by Lemma~\ref{U-rank}. Lemma~\ref{div} now yields that $\left(\X^{\pos\lambda}-\X^{\zap\lambda}\right)\deli \left(\X^{\alpha_1}-\X^{\beta_{j}}\right)$, which implies that $p - \left(\X^{\alpha_1}-\X^{\beta_{j}}\right)$ is not divisible by $\left(\X^{\pos\lambda}-\X^{\zap\lambda}\right)$, a contradiction to the minimality of $r+s$.

It remains to prove irreducibility of $\X^{\pos\lambda}-\X^{\zap\lambda}$.
Suppose that $\X^{\pos\lambda}-\X^{\zap\lambda}=gh$ and let 
\begin{align*}
	N_g&=\left\{\gamma \tz g(\gamma)=0 \right\},
&
	N_h&=\left\{\gamma \tz h(\gamma)=0 \right\}.
\end{align*}
Clearly, for each $\gamma\in\NUL(\lambda)$ either $g(\gamma)=0$ or $h(\gamma)=0$. Hence $\mathcal N(\lambda)=N_g \cup N_h$ and at least one of the two sets is of rank $n-1$
by Lemma~\ref{U-rank}. By the first part of the proof, we have that $\X^{\pos\lambda}-\X^{\zap\lambda}$ divides either $g$ or $h$, which we wanted to show.
\end{proof}

An immediate consequence of the last result and Lemma~\ref{evaluation}(4) is that 
the polynomial $\X^{\lambda_0}-\X^{\lambda_1}$ is irreducible %in $\Z[\X]$ 
if and only if $\lambda_0\cdot\lambda_1=0$ and $\lambda_0-\lambda_1$
has coprime coefficients.

Combining the previous results on multivariate polynomials with Lemma \ref{ranks}, we obtain the following observation.

\begin{lemma}\label{dependent}
Let $h$ be a solution of rank $n-1$ of equations $E$ and $E'$.  Then, for each $j,k=1,2,\dots,n$, the determinant $t_{jk}^{E,E'}$ is divisible by $\X^{\pos\lambda}-\X^{\zap\lambda}$ where $\lambda$ has coprime coefficients and $\NUL(\lambda)=\Gamma_h$.
In particular, $\SE_E(\beta)$ and $\SE_{E'}(\beta)$ are linearly dependent over $\Q(x)$ for each $\beta\in \Gamma_h\cap \N_0^{n-1}$.
\end{lemma}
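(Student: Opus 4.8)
The plan is to read the conclusion through the lens of $2\times 2$ minors. The polynomial $t_{jk}^{E,E'}$ is, by definition, $S_{E,x_j}S_{E',x_k}-S_{E',x_j}S_{E,x_k}$, so after evaluating at a length type $\beta$ it is exactly the minor taken from columns $j$ and $k$ of the $2\times n$ matrix whose rows are $\SE_E(\beta)$ and $\SE_{E'}(\beta)$. Hence the two rows are linearly dependent over $\Q(x)$ if and only if every $t_{jk}^{E,E'}(\beta)$ vanishes, and this observation will link both assertions of the lemma. Since $h$ has rank $n-1$, the space $\Gamma_h$ is a hyperplane in $\Q^n$; I would fix the primitive integer normal vector $\lambda\in\Z^n$ to $\Gamma_h$, which has coprime coefficients and satisfies $\NUL(\lambda)\Q=\Gamma_h$. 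This is the $\lambda$ of the statement.

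Next I would manufacture a large common zero set of the $t_{jk}^{E,E'}$. For $\alpha\in\N^k$ one computes $L(\vartheta_\alpha\circ h)=\sum_i(\alpha)_i\gamma(h)_i$, which lies in $\Gamma_h\cap\Z^n\subseteq\NUL(\lambda)$. By Lemma~\ref{repre} the vectors $\SE_E(L(\vartheta_\alpha\circ h))$ and $\SE_{E'}(L(\vartheta_\alpha\circ h))$ are linearly dependent over $\Q(x)$, so by the minor description above $t_{jk}^{E,E'}(L(\vartheta_\alpha\circ h))=0$ for all $j,k$. Thus every $t_{jk}^{E,E'}$ vanishes on the set $N=\{L(\vartheta_\alpha\circ h)\mid\alpha\in\N^k\}\subseteq\NUL(\lambda)$, and once I know that $N$ has rank $n-1$, Lemma~\ref{generator} will give at once $(\X^{\pos\lambda}-\X^{\zap\lambda})\deli t_{jk}^{E,E'}$ for every $j,k$.

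The decisive and most delicate step is therefore checking that $N$ has rank $n-1$, and this is where I expect the main obstacle. The difficulty is that $N$ uses only combinations of the $\gamma(h)_i$ with all coefficients strictly positive, so it is a translate of a cone rather than a cone through the origin, and Lemma~\ref{rank} does not apply verbatim (one does have $\dim N\Q=n-1$, but a cover by finitely many $(n-2)$-dimensional subspaces only pulls back to a cover of $\N^{n-1}$ by finitely many affine hyperplanes). I would resolve this by rerunning the Vandermonde argument of Lemma~\ref{rank} in shifted form: fixing $n-1$ independent vectors among the $\gamma(h)_i$ and using that the points $(t,t^2,\dots,t^{n-1})$ for $t\in\N$ meet each affine hyperplane in at most $n-1$ points, so they cannot be covered by finitely many. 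An alternative is to decompose $L_h=\bigcup_{S\subseteq\{1,\dots,k\}}\{\sum_{i\in S}a_i\gamma(h)_i\mid a_i\in\N\}$, observe each piece is closed under $\N$-combinations, invoke Lemma~\ref{ranks} and Lemma~\ref{U-rank} to find a piece of rank $n-1$, and extend Lemma~\ref{repre} to $\alpha$ supported on that piece. Either route delivers a rank-$(n-1)$ subset of $\NUL(\lambda)$ annihilated by every $t_{jk}^{E,E'}$.

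Finally, the ``in particular'' clause follows by evaluating the divisibility at a point. If $\beta\in\Gamma_h\cap\N_0^n$, then $\lambda\cdot\beta=0$, so $(\pos\lambda-\zap\lambda)\cdot\beta=0$ and Lemma~\ref{evaluation}\eqref{direct2} gives $(\X^{\pos\lambda}-\X^{\zap\lambda})(\beta)=0$. Since $\Omega_\beta$ is a ring homomorphism and $(\X^{\pos\lambda}-\X^{\zap\lambda})\deli t_{jk}^{E,E'}$, this forces $t_{jk}^{E,E'}(\beta)=0$ for all $j,k$; all $2\times 2$ minors of the matrix with rows $\SE_E(\beta)$ and $\SE_{E'}(\beta)$ vanish, whence these two rows are linearly dependent over $\Q(x)$, as required.
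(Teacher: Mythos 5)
Your proposal is correct and follows essentially the same route as the paper: Lemma~\ref{repre} gives vanishing of each $t_{jk}^{E,E'}$ on the length types $L(\vartheta_\alpha\circ h)$, a rank-$(n-1)$ argument feeds this into Lemma~\ref{generator} to obtain divisibility by $\X^{\pos\lambda}-\X^{\zap\lambda}$, and the ``in particular'' clause follows by evaluating the divisibility at $\beta$ via Lemma~\ref{evaluation}. The one point where you go beyond the paper is worth keeping: the paper simply cites Lemma~\ref{ranks} for the rank of $L_h$, whereas Lemma~\ref{repre} only yields vanishing on the sub-cone with all coefficients of $\alpha$ strictly positive, and your shifted Vandermonde argument (or, equivalently, applying Lemma~\ref{rank} to the translated generating set $\{\gamma(h)_i+\sum_j\gamma(h)_j\}$, whose $\N$-combinations do land in that sub-cone) is exactly the patch needed to make that step airtight.
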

\begin{proof}
Let $t_{jk}=t_{jk}^{E,E'}$. By Lemma \ref{repre},
\[t_{jk}(L(\vartheta_\alpha \circ h))=0\] 
for each $j,k=1,2,\dots,n$ and each $\alpha\in \N^{n-1}$. By Lemma \ref{ranks}, the set $L_h$ is a subset of $\Gamma_h$ of rank $n-1$, and Lemma \ref{generator} implies that $t_{jk}\in \Z[\X]$ is divisible by $\X^{\pos\lambda}-\X^{\zap\lambda}$.
Therefore $t_{jk}(\beta)=0$ for each $\beta\in \Gamma_h\cap \N_0^{n-1}$. This concludes the proof.
\end{proof}

Let $p$ be a polynomial in $\Z[\X]$ such that
\[p=\sum_{i\in I} r_i\]
where $r_i=\pm\X^{\alpha_i}$ with $\alpha_i\in \N_0^n$
and $r_i\neq -r_j$ for $i\neq j$.
We say that the monomial $r_j$, $j\in I$, is \emph{minimal} in $p$, if it has no divisor $r_i$, $i\in I$ with $i\neq j$, that is, if $\alpha_j$ is a minimal element of $\{\alpha_i\ | \ i\in I\}$ 
with respect to the usual product order on $\N_0^n$.

We shall need the following little combinatorial fact.

\begin{lemma}\label{cokolada}
Let $A$ be a set and let $S$ be a subset of $A^k$ such that for each $j\leq k$ there are vectors $s_1,s_2\in A^k$ satisfying
$(s_1)_j\neq (s_2)_j$ and $(s_1)_{j'}= (s_2)_{j'}$ for each $j'\neq j$. Then $S$ has cardinality at least $k+1$.
\end{lemma}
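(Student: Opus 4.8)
The plan is to encode the hypothesis as a graph on $S$ and then exploit an acyclicity phenomenon. Reading the hypothesis in the only nonvacuous way, namely with the two witnessing vectors lying in $S$, for each coordinate $j\le k$ I would fix one pair of elements of $S$ that agree in every coordinate except the $j$-th, and let $e_j$ be the two-element subset of $S$ they span. Since the two endpoints of $e_j$ differ precisely in coordinate $j$, an edge determines its ``direction'' uniquely, so the sets $e_1,\dots,e_k$ are pairwise distinct; they form $k$ distinct edges of a graph $H$ whose vertex set $V$ is the collection of all elements of $S$ occurring as an endpoint of some $e_j$. As $V\subseteq S$, it then suffices to prove $|V|\ge k+1$.

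The heart of the argument is to show that $H$ is \emph{acyclic}. Suppose for contradiction that $H$ contains a cycle, traversing vertices $v_0,v_1,\dots,v_m=v_0$ along distinct edges. Each edge of $H$ alters exactly one coordinate, and edges $e_j,e_{j'}$ with $j\ne j'$ alter different coordinates; hence the edges appearing in the cycle change pairwise distinct coordinates, and every coordinate that changes at all changes on exactly one step of the cycle. Fix such a coordinate $i$, changed say on the step from $v_{t-1}$ to $v_t$. On every other step coordinate $i$ is left unchanged, so $(v_0)_i=\cdots=(v_{t-1})_i$ and $(v_t)_i=\cdots=(v_m)_i=(v_0)_i$, forcing $(v_{t-1})_i=(v_t)_i$; this contradicts the defining property of the edge from $v_{t-1}$ to $v_t$. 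Therefore $H$ has no cycles, i.e. it is a forest.

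Once acyclicity is established the conclusion is immediate: a forest with $k$ edges and $c$ connected components has exactly $k+c$ vertices, and $c\ge 1$, so $|V|\ge k+1$. Since $V\subseteq S$, this yields $|S|\ge k+1$ as claimed. (For $k\ge 1$ the hypothesis already guarantees $S\ne\emptyset$, so the component count is legitimate.)

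I expect the acyclicity step to be the only genuine obstacle, since it is exactly where the structure of the hypothesis is used: one ``active'' direction per coordinate, with all $k$ directions distinct, is precisely what prevents a cycle from closing up. Everything surrounding it, the construction of the $k$ distinct edges and the final forest-counting inequality, is routine. The one convention worth stating explicitly at the outset is that the vectors furnished by the hypothesis are taken in $S$ rather than merely in $A^k$, which is the reading that makes the statement meaningful.
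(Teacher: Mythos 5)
Your proof is correct, but it takes a genuinely different route from the paper's. The paper argues by induction on $k$: projecting $S$ onto the first $k-1$ coordinates gives a set $S'\subseteq A^{k-1}$; the witnessing pair for coordinate $k$ shows this projection is not injective on $S$, so $|S'|<|S|$, while the witnessing pairs for $j\le k-1$ survive the projection, so $S'$ still satisfies the hypothesis and the induction hypothesis yields $|S'|\ge k$, whence $|S|\ge k+1$. You instead form the graph whose edges are the $k$ witnessing pairs (correctly observing that these pairs are pairwise distinct, since a pair determines the unique coordinate in which its members differ), prove it is acyclic by the ``each coordinate flips exactly once around a cycle, yet must return to its starting value'' argument, and then count vertices of a forest with $k$ edges. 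Both arguments are sound and elementary; the paper's induction is a couple of lines shorter, while your acyclicity argument is more structural: it exhibits $k+1$ explicit elements of $S$ as vertices of a forest and makes visible why the bound is tight (a path of length $k$ in the hypercube $\{0,1\}^k$ realizes $|S|=k+1$). Your opening remark that the hypothesis must be read with $s_1,s_2\in S$ rather than $s_1,s_2\in A^k$ is the right reading of a typo in the statement; the paper's own proof tacitly makes the same assumption.
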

\begin{proof}
Proceed by induction. For $k=1$, the claim is obvious. Let now $k>1$ and consider the set $S'\subset A^{k-1}$ resulting from $S$ by projection on first $k-1$ coordinates. By assumption, the projection is not injective, hence $|S'|<|S|$. It is easy to see that $S'$ satisfies the hypothesis, which implies $|S'|\geq k$, and the proof is complete.   
\end{proof}

\begin{theorem}\label{factors}
Let
\begin{align}\label{faktorizace}
	p=\left(\X^{\lambda_0^{(1)}}-\X^{\lambda_1^{(1)}}\right)\left(\X^{\lambda_0^{(2)}}-\X^{\lambda_1^{(2)}}\right)\cdots \left(\X^{\lambda_0^{(k)}}-\X^{\lambda_1^{(k)}}\right)\sum_{j\in J} e_j\X^{\alpha_j}, \tag{$**$}
\end{align}
where $\left(\X^{\lambda_0^{(i)}}-\X^{\lambda_1^{(i)}}\right)$ are distinct irreducible polynomials, all $\lambda_b^{(i)}$ are non-zero elements of $\N_0^n$, $e_j=\pm 1$  and $\alpha_j\neq \alpha_{j'}$ for any $j,j'\in J$ such that $e_j\neq e_j'$.

Then $p$ contains at least $k+1$ minimal monomials. 
\end{theorem}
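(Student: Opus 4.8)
The plan is to detect minimal monomials by means of strictly positive weight functionals, and to isolate the contribution of the binomial factors from that of the trailing sum by a separation argument whose combinatorial core is Lemma~\ref{cokolada}. The guiding geometric picture is that the minimal monomials we exhibit are the ``lower'' vertices of the zonotope spanned by the segments $[\lambda_0^{(i)},\lambda_1^{(i)}]$, shifted by the Newton polytope of $\sum_j e_j\X^{\alpha_j}$.

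\emph{Reduction to weights.} For a vector $w\in\Q^n$ with strictly positive entries, assign to each monomial $\X^\alpha$ the value $w\cdot\alpha$. Since $w$ is positive, a monomial that uniquely minimizes $w\cdot\alpha$ over the support of $p$ is automatically minimal in the product order. For $w$ avoiding the finite union of the hyperplanes $\{w\cdot(\lambda_0^{(i)}-\lambda_1^{(i)})=0\}$ and $\{w\cdot(\alpha_j-\alpha_{j'})=0\}$, each factor in \eqref{faktorizace} has a unique lowest-weight term: the binomial $\X^{\lambda_0^{(i)}}-\X^{\lambda_1^{(i)}}$ contributes $\pm\X^{\lambda_{c_i(w)}^{(i)}}$, where $c_i(w)\in\{0,1\}$ records which of $\lambda_0^{(i)},\lambda_1^{(i)}$ has smaller $w$-value, and the sum contributes $e_{j^*(w)}\X^{\alpha_{j^*(w)}}$ for the unique $w$-smallest exponent of the sum. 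Because $\Z[\X]$ is a domain and the weight is additive on products, the lowest-weight part of $p$ is the product of these terms, a single non-vanishing monomial $\pm\X^{M(w)}$ with $M(w)=\sum_{i\le k}\lambda_{c_i(w)}^{(i)}+\alpha_{j^*(w)}$. Thus $\X^{M(w)}$ is a genuine minimal monomial of $p$, and it suffices to produce $k+1$ generic positive weights yielding pairwise distinct $M(w)$.

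\emph{Separation of the sum.} Write $v(w)=\sum_{i\le k}\lambda_{c_i(w)}^{(i)}$ for the binomial part of $M(w)$. The key claim is that if weights $w_1,\dots,w_m$ give pairwise distinct $v(w_a)$, then the monomials $M(w_a)$ are pairwise distinct. Indeed, assuming $M(w_a)=M(w_b)$ and applying $w_a\cdot$, the two inequalities $w_a\cdot v(w_a)\le w_a\cdot v(w_b)$ and $w_a\cdot\alpha_{j^*(w_a)}\le w_a\cdot\alpha_{j^*(w_b)}$ (minimality) sum to an equality, hence both are equalities; the unique-argmin property then forces $c(w_a)=c(w_b)$, so $v(w_a)=v(w_b)$, contradicting distinctness. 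The very same argument shows that distinct patterns $c(w)$ give distinct $v(w)$, which handles any linear dependence among the $\lambda_1^{(i)}-\lambda_0^{(i)}$. Consequently it is enough to realize $k+1$ distinct patterns $c(w)$ by generic positive weights.

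\emph{Counting patterns via Lemma~\ref{cokolada}.} Put $\delta^{(i)}=\lambda_1^{(i)}-\lambda_0^{(i)}$, so $c_i(w)$ is governed by the sign of $w\cdot\delta^{(i)}$. By the irreducibility criterion recorded after Lemma~\ref{generator}, $\lambda_0^{(i)}\cdot\lambda_1^{(i)}=0$ and $\delta^{(i)}$ has coprime coefficients; since moreover both $\lambda_0^{(i)}$ and $\lambda_1^{(i)}$ are nonzero with disjoint supports, $\delta^{(i)}$ has at least one strictly positive and one strictly negative coordinate. Hence, as in the proof of Lemma~\ref{rank n-1}(2), the hyperplane $\{w\cdot\delta^{(i)}=0\}$ meets the open positive orthant, and coprimality together with distinctness of the factors makes $\delta^{(1)},\dots,\delta^{(k)}$ pairwise non-proportional, so these $k$ hyperplanes are pairwise distinct there. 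Let $S\subseteq\{0,1\}^k$ collect the patterns $c(w)$ realized by generic positive $w$. For each $i$, picking a positive $w^\ast$ on $\{w\cdot\delta^{(i)}=0\}$ but off all the other (lower-dimensional) slices and perturbing it by $\pm\varepsilon\,\delta^{(i)}$ gives two generic positive weights whose patterns agree in every coordinate but the $i$-th. Thus $S$ satisfies the hypothesis of Lemma~\ref{cokolada}, whence $|S|\ge k+1$; combined with the previous paragraph this yields $k+1$ pairwise distinct minimal monomials.

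I expect the main obstacle to be precisely the presence of the trailing sum $\sum_j e_j\X^{\alpha_j}$: without it the statement is just the count of lower vertices of the zonotope generated by the $\delta^{(i)}$, and the Separation Lemma is exactly the device that transports that count to $p$, preventing a shift by some $\alpha_j$ from merging two otherwise distinct binomial contributions. The remaining routine points (genericity being achievable over $\Q$, and the perturbation keeping the weight positive and off the finitely many excluded hyperplanes) should be straightforward.
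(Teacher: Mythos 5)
Your proposal is correct and follows essentially the same route as the paper: your generic positive weights and their sign patterns are the paper's ``separating types'' and ``profiles,'' the perturbation of a positive point lying on exactly one hyperplane $\{w\cdot\delta^{(i)}=0\}$ (whose existence rests on the same geometric fact as Lemma~\ref{rank n-1}(2)) is the paper's construction of $t_+$ and $t_-$, and the final count is by Lemma~\ref{cokolada} in both cases. The only cosmetic difference is that you impose genericity also with respect to the differences $\alpha_j-\alpha_{j'}$ to get a unique lowest-weight monomial, where the paper instead handles ties among the $\alpha_j$ by an incomparability argument.
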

\begin{proof}
Let $t\in \Q_+^n$. If $t\cdot \lambda_0^{(i)}\neq t\cdot\lambda_1^{(i)}$ for all $i=1,2,\dots,k$, then we say that $t$ is a \emph{separating type}.
The \emph{profile} of a separating type $t$ is the $k$-tuple $Z(t):=(z_1,\dots,z_k)\in\{1,-1\}^k$, where 
\[z_i:=\sgn\left(t\cdot \lambda_0^{(i)}- t\cdot\lambda_1^{(i)}\right).\] 
 For a separating type $t$ we define $b_{i,t}\in \{0,1\}$, $i=1,2,\dots,k$, $j_t\in J$, and 
\[\rho_t:={\sum \lambda_{b_{i,t}}^{(i)}+\alpha_{j_t}}\]
so that $t\cdot \lambda_{b_{i,t}}^{(i)}< t\cdot \lambda_{1-b_{i,t}}^{(i)}$ for each $i$, and $t\cdot \alpha_{j_t}\leq t\cdot \alpha_{j}$ for all $j\in J$.
 Clearly,  $t\cdot \rho_t\leq t\cdot \beta$ for any monomial $\pm \X^\beta$ resulting from the expansion of \eqref{faktorizace}. Moreover, if $t\cdot \rho_t= t\cdot \beta$, then 
\[\beta={\sum \lambda_{b_{i,t}}^{(i)}+\alpha_{k}}\] for a suitable $k\in J$
and $t\cdot \alpha_{j_t}=t\cdot \alpha_{k}$. Therefore, either $\alpha_{j_t}=\alpha_{k}$, or $\alpha_{j_t}$ and $\alpha_{k}$ are incomparable.
We conclude that for a separating type $t$, the monomial $\X^{\rho_t}$ is minimal.
Also $\rho_{t_1}\neq \rho_{t_2}$ if $t_1$ and $t_2$ are separating types with different profiles, since then $t_1\cdot\rho_{t_1}<t_1\cdot\rho_{t_2}$.

For every $i\le k$ put $\lambda^{(i)}=\lambda_{0}^{(i)}-\lambda_{1}^{(i)}$  and note that $\lambda^{(i)}$ has coprime coefficients, $\lambda_{0}^{(i)}=\pos\lambda^{(i)}$, and
$\lambda_{1}^{(i)}=\zap\lambda^{(i)}$.  
It remains to show that there exist at least $k+1$ separating types with distinct profiles.

First, we show for each $j\le k$ that there exists a vector $t\in \N^n$ such that 
$t\cdot \lambda_{0}^{(j)}=t\cdot \lambda_{1}^{(j)}$, while $t\cdot \lambda_{0}^{(i)}\neq t\cdot \lambda_{1}^{(i)}$ for every $i\ne j$.
Assume to the contrary that for every  $t\in \NUL\left(\lambda^{(j)}\right)^+$ there exists $i\ne j$ such that $t\cdot \lambda_{0}^{(i)}= t\cdot \lambda_{1}^{(i)}$.
Thus $\mathcal N(\lambda^{(j)})^+\subseteq \bigcup_{i\ne j}V_i$ where
$V_i=\left(\mathcal N(\lambda^{(i)})\cap\NUL(\lambda^{(j)})\right)\Q$, $i\ne j$. Since both $\lambda_i$ and $\lambda_j$, $i\ne j$, have coprime coefficients and $\lambda_i\neq \lambda_j$, each $V_i$, $i\ne j$ is an $(n-2)$-dimensional subspace.
By Lemma~\ref{rank n-1}(2), $\mathcal N\left(\lambda^{(j)}\right)^+$ is of rank $n-1$, a contradiction.

Therefore there is a neighborhood of $t$ in $\Q_+^n$ such that $t'\cdot \lambda_{0}^{(i)}\neq t'\cdot \lambda_{1}^{(i)}$, $i\ne j$, for each $t'$ from the neighborhood.
This implies that $(Z(t_+))_j=1$, $(Z(t_-))_j=-1$, and $(Z(t_+))_i=(Z(t_-))_i$, $i\ne j$, for suitable $t_+$ and $t_-$ from the neighborhood.
The proof is completed by Lemma \ref{cokolada}.
\end{proof}

\section{Independent systems of word equations}
In this section, we apply our findings to the question of independence of word equations.

Two systems of equations are \emph{equivalent} if they have the same set of solutions (of all ranks). A system is \emph{independent} if it is not equivalent to any of its proper subsystems. The \emph{compactness property}, proved in \cite{Albert} and \cite{Guba}, states that each system of word equations over $n$ unknowns contains an equivalent finite subsystem. However, very little is known about the possible size of independent systems. There is a lower bound $\Omega(n^4)$ by a direct construction in \cite{defect}, but a nearly complete lack of upper bounds. In fact, results from \cite{Saarela14} discussed here are the only upper bounds known. Note that they depend on the length of equations and apply only to solutions of rank $n-1$. We therefore say that a system  of equations $T$ over $n$ variables is \emph{strongly independent} if any proper subsystem of $T$ has a solution of rank $n-1$ that is not a solution of $T$. 
We can now formulate our goal as to find an upper bound for the size of a strongly independent system.

Let us recall that $\Gamma_h=G_h\Q$ for each morphism $h$.
We say that morphisms $h$ and $h'$ satisfying $\Gamma_h=\Gamma_{h'}$ 
are \emph{linearly equivalent}. 
Note that $h$ and $\vartheta_\alpha\circ h$ are linearly equivalent for each $\alpha\in \N^r$.
Our goal can be achieved by bounding the number of pairwise linearly nonequivalent solutions of rank $n-1$ of two equations.

Let us first look at erasing solutions. Let $\delta_k(E)$ denote the equation in $n-1$ unknowns resulting from $E$ by erasing the variable $x_k$.
\begin{lemma}\label{mazaci}
Let $h$ be an erasing solution of rank $n-1$ of $E$. Then $h(x_k)$ is the empty word for exactly one $k$, and $\delta_k(E)$ is trivial. Moreover $\Gamma_h=\NUL(e_k)$, where $e_k$ is the canonical basis vector defined by $(e_k)_i=\delta_{ki}$.
\end{lemma}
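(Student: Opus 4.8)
The plan is to isolate the unique erased variable directly from the rank hypothesis, read off $\Gamma_h$ as the corresponding coordinate hyperplane, and then obtain triviality of $\delta_k(E)$ from the defect effect applied in $n-1$ unknowns.

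First I would record the elementary observation underlying everything: if $h(x_m)=\emp$, then $|h(x_m)|_{a_i}=0$ for every target letter $a_i$, so the $m$-th coordinate of each generator $\gamma(h)_i$ vanishes. Consequently every vector of $\Gamma_h=G_h\Q$ has zero $m$-th coordinate, i.e.\ $\Gamma_h\subseteq\NUL(e_m)$. If two distinct variables $x_k,x_{k'}$ were both erased, $\Gamma_h$ would lie in $\NUL(e_k)\cap\NUL(e_{k'})$, a space of dimension $n-2$, contradicting that $h$ has rank $n-1$. Since $h$ is erasing, at least one variable is sent to $\emp$, so there is exactly one $k$ with $h(x_k)=\emp$. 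As $\NUL(e_k)$ is an $(n-1)$-dimensional hyperplane containing the $(n-1)$-dimensional space $\Gamma_h$, the two coincide, which gives $\Gamma_h=\NUL(e_k)$.

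Next I would treat triviality of $\delta_k(E)$ by passing to the restriction $h'$ of $h$ to $\Xi\setminus\{x_k\}$. Because $h(x_k)$ is empty, deleting the occurrences of $x_k$ from both sides of $E$ leaves their $h$-images unchanged, so $h'$ is a non-erasing solution of $\delta_k(E)$; moreover, since $E$ is an equation in $n$ unknowns (all of which occur in it, by our convention identifying the domain with $\alp(E)$), the equation $\delta_k(E)$ genuinely uses all $n-1$ surviving unknowns. The generators of $h'$ arise from those of $h$ by dropping the identically zero $k$-th coordinate, so the coordinate projection is injective on $\Gamma_h$ and no linear relations are created or destroyed; hence $h'$ again has rank $n-1$. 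If $\delta_k(E)$ were nontrivial, the defect effect (Theorem~\ref{principal} together with Lemma~\ref{pismena}) would force every solution of this $(n-1)$-unknown equation to have rank at most $n-2$, contradicting that the rank of $h'$ is $n-1$. Therefore $\delta_k(E)$ is trivial.

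The substance of the argument lies entirely in the dimension count of the first step and the single invocation of the defect effect in the second; the only points requiring care are bookkeeping, namely verifying that restriction truly preserves the rank (which follows from the $k$-th coordinate being constantly zero) and that $\delta_k(E)$ still mentions all $n-1$ remaining variables, so that the defect bound reads $n-2$ and the contradiction is genuine. Neither is a serious obstacle.
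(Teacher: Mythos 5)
Your proof is correct and follows essentially the same route as the paper's: count dimensions to see that a rank-$(n-1)$ morphism can erase at most one variable, then restrict to the remaining $n-1$ unknowns and invoke the defect effect to force $\delta_k(E)$ to be trivial. You additionally spell out the identification $\Gamma_h=\NUL(e_k)$ explicitly, which the paper's proof leaves implicit in the same dimension count.
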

\begin{proof}
The definition of rank implies that $h$ of rank $r$ erases at most $n-r$ variables. Therefore an erasing $h$ of rank $n-1$ erases exactly one variable. The restriction of $h$ on $\Xi\setminus\{x_k\}$ is a solution of rank $n-1$ of an equation $\delta_k(E)$ over $n-1$ variables, therefore $\delta_k(E)$ is trivial by the defect effect (see Remark \ref{pozn}).   
\end{proof}
  We have the following consequence. 
\begin{lemma}\label{dve}
Let $h$ and $h'$ be linearly nonequivalent erasing solutions of rank $n-1$ of equations $E_1$ and $E_2$. Then  $E_1$ and $E_2$ are equivalent. 
\end{lemma}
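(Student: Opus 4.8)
The plan is to use Lemma~\ref{mazaci} to pin down the structure of both erasing solutions completely, and then deduce the equivalence of the two equations from this rigid structure. By Lemma~\ref{mazaci}, since $h$ is an erasing solution of rank $n-1$ of $E_1$, there is exactly one index $k$ with $h(x_k)$ empty, the equation $\delta_k(E_1)$ is trivial, and $\Gamma_h=\NUL(e_k)$. Likewise there is exactly one index $\ell$ with $h'(x_\ell)$ empty, $\delta_\ell(E_2)$ is trivial, and $\Gamma_{h'}=\NUL(e_\ell)$. The first thing I would observe is that the hypothesis that $h$ and $h'$ are linearly \emph{nonequivalent} forces $\NUL(e_k)\neq\NUL(e_\ell)$, hence $k\neq\ell$. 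So the two solutions erase two \emph{different} variables, and each of the two equations becomes trivial after erasing its respective variable.

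Next I would translate ``$\delta_k(E_1)$ is trivial'' into a concrete description of $E_1$. Writing $E_1=(u_1,v_1)$ with $u_1,v_1\in\Xi^*$, triviality of $\delta_k(E_1)$ means that deleting every occurrence of $x_k$ from $u_1$ and from $v_1$ yields the same word; equivalently, $u_1$ and $v_1$ agree letter-for-letter once all $x_k$'s are removed. Thus $u_1$ and $v_1$ are obtained from a common ``skeleton'' word $w$ over $\Xi\setminus\{x_k\}$ by inserting some number of copies of $x_k$ at various positions. The key point is that a morphism $f$ satisfies $f(u_1)=f(v_1)$ \emph{whenever} $f(x_k)$ is empty, because then $f(u_1)=f(w)=f(v_1)$. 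Symmetrically, $E_2=(u_2,v_2)$ has the property that any morphism sending $x_\ell$ to the empty word is a solution of $E_2$.

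The main step, and the place I expect the real obstacle, is to show that these two descriptions already force $E_1$ and $E_2$ to have exactly the same solution set, i.e.\ that \emph{every} solution of $E_1$ is a solution of $E_2$ and conversely. The natural attempt is to argue that the solution set of an equation of this ``erase-one-variable-to-get-triviality'' form is determined purely by the erased variable together with the skeleton, but this is not literally true for arbitrary solutions, so the statement must be using the convention (set up just before the lemma) that equivalence is tested only against solutions \emph{of rank $n-1$}, or else that the two equations have already been shown to share their full solution sets via the length-type/polynomial machinery of Section~3. I would therefore proceed by taking an arbitrary solution $f$ of $E_1$ and, using that $\Gamma_f$ is at most a hyperplane, comparing $\SE_{E_1}$ and $\SE_{E_2}$: the triviality of $\delta_k(E_1)$ means $\SE_{E_1}$ is supported, up to the factor recorded in Lemma~\ref{dependent}, on the $x_k$-column, and similarly for $E_2$ on the $x_\ell$-column. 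The hard part is checking that these two support patterns, together with the shared behaviour on erasing morphisms and the rank-$(n-1)$ restriction, leave no room for a solution of one equation that fails the other.

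Concretely, I would close the argument as follows. Suppose for contradiction that $E_1$ and $E_2$ are not equivalent; then, say, $E_1$ has a solution $f$ of some rank that is not a solution of $E_2$ (or vice versa). Since both $\delta_k(E_1)$ and $\delta_\ell(E_2)$ are trivial with $k\neq\ell$, each equation is, as a word equation, an identity once one deletes a single fixed variable. I would then test against the specific non-erasing solutions built from $h$ and $h'$ and use Lemma~\ref{repre} and Lemma~\ref{dependent}, which guarantee that for a rank-$(n-1)$ solution of two equations the polynomial vectors $\SE_{E_1}(\beta)$ and $\SE_{E_2}(\beta)$ are linearly dependent over $\Q(x)$ on the whole hyperplane $\Gamma\cap\N_0^{n-1}$. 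The distinctness $k\neq\ell$ means the two hyperplanes $\NUL(e_k)$ and $\NUL(e_\ell)$ are transverse, so their union (or rather the constraints they impose) covers enough length types to force $\SE_{E_1}$ and $\SE_{E_2}$ to be everywhere proportional as polynomial vectors, whence $E_1$ and $E_2$ have identical solution sets. I expect the delicate point to be justifying this last ``everywhere proportional'' conclusion rigorously rather than only on each hyperplane separately, and I would guard against it by invoking the irreducibility and divisibility results of Lemma~\ref{generator} to upgrade pointwise dependence to genuine divisibility of the determinants $t_{jk}^{E_1,E_2}$, thereby pinning the two equations together.
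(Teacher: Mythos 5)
There is a genuine gap, and it starts with how you read the hypothesis. ``Erasing solutions of rank $n-1$ of equations $E_1$ and $E_2$'' means that each of $h$ and $h'$ solves \emph{both} equations (this is how the lemma is invoked in the proof of Theorem~\ref{main}, where the $h_i$ are common solutions of $E$ and $E'$). You apply Lemma~\ref{mazaci} only to the pairs $(h,E_1)$ and $(h',E_2)$, obtaining just that $\delta_k(E_1)$ and $\delta_\ell(E_2)$ are trivial with $k\neq\ell$. Those two conditions alone do not imply equivalence: for $n=3$ take $E_1=(x_1x_2x_3,\,x_2x_1x_3)$ and $E_2=(x_1x_2x_3,\,x_1x_3x_2)$; then $\delta_1(E_1)$ and $\delta_2(E_2)$ are trivial, yet $x_1\mapsto a$, $x_2\mapsto a$, $x_3\mapsto b$ solves $E_1$ but not $E_2$. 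The paper applies Lemma~\ref{mazaci} four times, concluding that \emph{all} of $\delta_a(E_1)$, $\delta_b(E_1)$, $\delta_a(E_2)$, $\delta_b(E_2)$ are trivial, and reads off the resulting shape of both equations: a common skeleton over $\Xi\setminus\{x_a,x_b\}$ on both sides, with each gap filled on the two sides by words over $\{x_a,x_b\}$ having equal numbers of $x_a$ and equal numbers of $x_b$. Hence each equation is equivalent to $x_ax_b=x_bx_a$ (or trivial), and the two are equivalent to each other. That purely combinatorial step is the missing idea.

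The polynomial fallback you sketch cannot close this gap. Lemmas~\ref{repre} and~\ref{dependent} give linear dependence of $\SE_{E_1}(\beta)$ and $\SE_{E_2}(\beta)$ only for $\beta$ in $\Gamma_h\cup\Gamma_{h'}=\NUL(e_a)\cup\NUL(e_b)$, i.e.\ only for length types with $(\beta)_a=0$ or $(\beta)_b=0$; a nonerasing solution lies on neither hyperplane, so these constraints say nothing about it. Upgrading via Lemma~\ref{generator} yields at best that each determinant $t^{E_1,E_2}_{jk}$ is divisible by $(X_a-1)(X_b-1)$, which is very far from forcing $t^{E_1,E_2}_{jk}=0$, and nothing short of that would give the ``everywhere proportional'' conclusion you need. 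Finally, your suggestion that equivalence might be tested only against rank-$(n-1)$ solutions is not available: the paper defines equivalence as equality of the sets of solutions of all ranks.
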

\begin{proof}
By Lemma \ref{mazaci}, $h$ and $h'$ erase different variables $x_a$ and $x_b$ respectively, and both $E_i=(u_i,v_i)$, $i=1,2$, are of the form 
\begin{align*}
	u_i&=r_0^{(i)}\,x_{j^{(i)}_1}\,r_1^{(i)}\,x_{j^{(i)}_2}\,r_2^{(i)}\,\cdots x_{j^{(i)}_{m_i}}\,r_{m_i}^{(i)}\,,\\
	v_i&=z_0^{(i)}\,x_{j^{(i)}_1}\,z_1^{(i)}\,x_{j^{(i)}_2}\,z_2^{(i)}\,\cdots x_{j^{(i)}_{m_i}}\,z_{m_i}^{(i)}\,
\end{align*}
where $j^{(i)}_k\notin \{a,b\}$, $i=1,2$, $k\leq m_i$, and all $r^{(i)}_k, z^{(i)}_k$ are words in $\{x_a,x_b\}^*$ such that $|r^{(i)}_k|_{x_a}=|z^{(i)}_k|_{x_a}$ and $|r^{(i)}_k|_{x_b}=|z^{(i)}_k|_{x_b}$ for each $r^{(i)}_k,z^{(i)}_k$ with $k\leq m_i$. Therefore both $E_1$ and $E_2$ are equivalent to $x_ax_b=x_b x_a$.
\end{proof}

Note the following fact.
\begin{lemma}
Let $E$ and $E'$ be strongly independent. Then there are $k,l=1,2,\dots,n$ such that $t^{E,E'}_{kl}\neq 0$.
\end{lemma}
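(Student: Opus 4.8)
The plan is to prove the contrapositive: assuming $t^{E,E'}_{kl}=0$ for all $k,l$, I will show that $E$ and $E'$ cannot be strongly independent. The vanishing of all the $2\times 2$ determinants $t^{E,E'}_{kl}=S_{E,x_j}S_{E',x_k}-S_{E',x_j}S_{E,x_k}$ is exactly the statement that the two vectors $\SE_E=(S_{E,x_1},\dots,S_{E,x_n})$ and $\SE_{E'}=(S_{E',x_1},\dots,S_{E',x_n})$ are \emph{proportional} as vectors over the field of fractions $\Q(\X)$. So the first step is to translate the determinant condition into this linear-dependence statement over $\Q(\X)$.

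Next I would exploit this proportionality. Since neither $E$ nor $E'$ is trivial (a trivial equation has no solution of rank $n-1$ by the defect effect, so it could not witness strong independence), Lemma~\ref{nuly}\eqref{nulajedna} guarantees $\SE_E\neq 0$ and $\SE_{E'}\neq 0$. Proportionality over $\Q(\X)$ then means there is a nonzero rational function, in fact (by clearing denominators in the UFD $\Z[\X]$) a ratio $q/q'$ of coprime polynomials, such that $q'\,\SE_E=q\,\SE_{E'}$ coordinatewise. The key consequence I want to extract is that for \emph{every} length type $\beta$, the specialized vectors $\SE_E(\beta)$ and $\SE_{E'}(\beta)$ are linearly dependent over $\Q(x)$ whenever the relevant polynomials do not all vanish. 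From this I aim to conclude that $E$ and $E'$ have the \emph{same} set of solutions: a morphism $h$ with length type $L$ is a solution of $E$ precisely when $\PE(h)$ is orthogonal to $\SE_E(L)$, and proportionality of $\SE_E$ and $\SE_{E'}$ forces $\SE_E(L)$ and $\SE_{E'}(L)$ to cut out the same orthogonality condition on $\PE(h)$.

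The main obstacle will be handling the specialization carefully at length types where $q$ or $q'$ vanishes, since there the fractions are undefined and one cannot naively substitute $\beta$ into $q'\SE_E=q\SE_{E'}$. The honest argument must show that $\SE_E(\beta)$ and $\SE_{E'}(\beta)$ remain parallel (or one of them is zero) even on this exceptional set, so that the solution sets of $E$ and $E'$ still coincide there; this is where the bulk of the work lies and where one needs the precise structure of the $S_{E,x_j}$ together with Lemma~\ref{nuly}\eqref{nuladva} to control the coordinates that vanish. Once the solution sets are shown to agree for all length types, $E$ and $E'$ are equivalent, so in particular each is a solution of the other, contradicting strong independence: the subsystem $\{E\}$ of $\{E,E'\}$ would already have every rank-$(n-1)$ solution that $\{E,E'\}$ has. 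This contradiction establishes the existence of some pair $k,l$ with $t^{E,E'}_{kl}\neq 0$.
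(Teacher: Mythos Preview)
Your overall strategy---prove the contrapositive by showing that if all $t^{E,E'}_{kl}$ vanish then $E$ and $E'$ have the same rank-$(n-1)$ solutions---is exactly the paper's. However, you introduce an unnecessary detour. You lift to a proportionality $q'\,\SE_E=q\,\SE_{E'}$ over $\Q(\X)$ and then worry about specializations at which $q$ or $q'$ vanishes; this is the ``main obstacle'' you describe. The paper sidesteps it entirely: since each $t_{kl}$ is the zero polynomial in $\Z[\X]$, its evaluation $t_{kl}(\beta)$ is zero in $\Z[x]$ for \emph{every} $\beta$, so the $2\times 2$ minors of the matrix with rows $\SE_E(\beta),\SE_{E'}(\beta)$ all vanish and these two vectors are linearly dependent over $\Q(x)$ for every $\beta$, with no exceptional set to analyze. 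Once this is in hand, the paper finishes by observing (via Lemma~\ref{nuly}\eqref{nuladva} together with Lemma~\ref{mazaci}) that a morphism $h$ of rank $n-1$ has at most one zero coordinate in $L(h)$, hence $\SE_E(L(h))$ and $\SE_{E'}(L(h))$ are both nonzero; linear dependence then forces $\SE_E(L(h))\cdot\PE(h)=0 \Leftrightarrow \SE_{E'}(L(h))\cdot\PE(h)=0$.

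One small correction: your parenthetical ``a trivial equation has no solution of rank $n-1$ by the defect effect'' is backwards---a trivial equation is solved by every morphism, including those of rank $n$. The correct reason neither $E$ nor $E'$ can be trivial is that strong independence requires $\{E'\}$ to have a rank-$(n-1)$ solution that does not solve $E$; if $E$ were trivial this is impossible. The paper simply says ``Independence implies that $E$ and $E'$ are both nontrivial.''
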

\begin{proof}
Independence implies that $E$ and $E'$ are both nontrivial. If $t^{E,E'}_{kl}=0$ for all $k,l=1,2,\dots,n$, then $\SE_E(\beta)$ and $\SE_{E'}(\beta)$ are linearly dependent for each $\beta$. By Lemma \ref{nuly} and Lemma \ref{mazaci}, $\SE_E(L(h))$ and $\SE_{E'}(L(h))$ are both nonzero for any morphism of rank $n-1$. Therefore $\SE_E(L(h))\cdot\PE(h)=0$ if and only if $\SE_{E'}(L(h))\cdot\PE(h)=0$, and $E$ and $E'$ are not strongly independent.
\end{proof}

The next observation is a variant of a similar claim in the proof of \cite[Theorem 5.3]{Saarela14}:

\begin{lemma}\label{minimal} For every pair of equations $E$ and $E'$ and every $k<\ell\le n$, the polynomial $t^{E,E'}_{k\ell}$ contains
at most $2(|E|_{x_k}+|E|_{x_\ell})$ minimal monomials.
\end{lemma}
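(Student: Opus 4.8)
The plan is to count the minimal monomials by a \emph{charging argument}: I would exhibit a map from the set of minimal monomials of $t^{E,E'}_{k\ell}$ into the monomials of the two ``$E$-factors'' $S_{E,x_k}$ and $S_{E,x_\ell}$, and show that this map is at most two-to-one. Since the positive monomials of $S_{E,x_k}$ correspond to occurrences of $x_k$ on one side of $E$ and the negative ones to occurrences on the other side, $S_{E,x_k}$ has at most $|E|_{x_k}$ monomials, and likewise $S_{E,x_\ell}$ has at most $|E|_{x_\ell}$. A two-to-one map into these then yields at most $2(|E|_{x_k}+|E|_{x_\ell})$ minimal monomials, as required.

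The first and geometrically crucial observation is a \emph{chain structure}: for any equation $F=(u,v)$ and any variable $x$, the exponents occurring in $S_{F,x}$ lie in the union of two chains of $(\N_0^n,\le)$ under the product order. Indeed, the positive monomials of $S_{F,x}$ are the exponent vectors $\prod_{t<a}X_{i_t}$ recording the prefix of $u$ preceding the successive occurrences of $x$; these prefixes are nested, so their exponent vectors increase componentwise and form a chain, and the negative monomials coming from $v$ form a second chain. The point I would extract from this is the elementary fact that a union of two chains contains no antichain of three elements (by pigeonhole two of the three would lie in a common chain, hence be comparable); this is exactly what supplies the factor $2$.

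Now I would set up the charging. Writing $t^{E,E'}_{k\ell}=S_{E,x_k}S_{E',x_\ell}-S_{E,x_\ell}S_{E',x_k}=A-B$, any minimal monomial $\X^\gamma$ has nonzero coefficient and so lies in the support of $A$ or of $B$; I pick one such product, say $A$, and a factorization $\gamma=a+a'$ where $\X^a$ is a genuine monomial of $S_{E,x_k}$ and $\X^{a'}$ a monomial of $S_{E',x_\ell}$, and I charge $\X^\gamma$ to $\X^a$. Symmetrically, a monomial arising from $B$ is charged to a monomial of $S_{E,x_\ell}$. The two-to-one property is where the chain observation is used for the factor coming from the \emph{other} equation: if distinct minimal monomials $\X^{\gamma_1},\X^{\gamma_2}$ are charged to the same $\X^a$, then $\gamma_i=a+a_i'$ with $a_i'=\gamma_i-a\in$ support of $S_{E',x_\ell}$; since $\{\gamma_1,\gamma_2\}$ is an antichain (minimality) and translation by $-a$ preserves the product order, $\{a_1',a_2'\}$ is an antichain inside the support of $S_{E',x_\ell}$, which is a union of two chains, forcing at most two such $\gamma_i$. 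Each monomial of $S_{E,x_k}$ and of $S_{E,x_\ell}$ thus receives at most two charges, giving the bound.

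I expect the main obstacle to be the bookkeeping around cancellation in the difference $A-B$, which is exactly what makes the naive ``leading term'' approach fail. The resolution I would emphasize is that, because we only need an \emph{upper} bound, cancellation is harmless: every surviving minimal monomial still lies in the support of $A$ or $B$ and hence admits at least one factorization in the corresponding product, and choosing one representative per monomial suffices. The second delicate point, which I would state carefully, is that the chain structure is applied to the factor from $E'$ (the one we do \emph{not} charge to), via the translation-invariance of the product order; this is the step that genuinely produces the constant $2$, and everything else is routine.
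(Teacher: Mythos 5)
Your proof is correct and is essentially the paper's argument in dual form: the paper covers the support of $t^{E,E'}_{k\ell}$ by the $2(|E|_{x_k}+|E|_{x_\ell})$ chains $\mu Q_\ell^{\pm}$, $\mu Q_k^{\pm}$ (where $Q^{\pm}$ are the nested prefix-products of the two sides of $E'$) and observes that each chain contains at most one minimal monomial, which is exactly your at-most-two-to-one charging of minimal monomials to the monomials $\mu$ of $S_{E,x_k}$ and $S_{E,x_\ell}$. Both arguments rest on the same three facts you identify: cancellation only shrinks the support, the $E'$-factors split into two divisibility chains, and a chain meets the antichain of minimal exponents in at most one point.
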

\begin{proof}
Let 
\begin{align*}
	E&= (x_{i_1}x_{i_2}\dots x_{i_r}\,,\,x_{j_1}x_{j_2}\dots x_{j_s}),&
	E'&= (x_{i'_1}x_{i'_2}\dots x_{i'_{r'}}\,,\,x_{j'_1}x_{j'_2}\dots x_{j'_{s'}}),
\end{align*}
and let
\begin{align*}
	Q_k^+&= \sum_{a: i'_a=k}\prod_{t=1}^{a-1}X_{i'_t}\,,& Q_k^-&= -\sum_{a: j'_a=k}\prod_{t=1}^{a-1}X_{j'_t}\,,\\
	Q_\ell^+&= \sum_{a: i'_a=\ell}\prod_{t=1}^{a-1}X_{i'_t}\,,& Q_\ell^-&= -\sum_{a: j'_a=\ell}\prod_{t=1}^{a-1}X_{j'_t}\,,\\
\end{align*}
so that $S_{E',x_k}=Q_k^+ + Q_k^-$ and $S_{E',x_\ell}=Q_\ell^+ + Q_\ell^-$. Then
\[t_{k\ell}^{E,E'}=S_{E,x_k}Q_\ell^+ + S_{E,x_k}Q_\ell^- +S_{E,x_\ell} Q_k^+ + S_{E,x_\ell}Q_k^-\,.\]
Since monomials in $Q_\ell^+$ are totally ordered by divisibility, there is at most one minimal monomial in $\mu\, Q_\ell^+$, for each monomial $\mu$ in $S_{E,x_k}$. Analogous arguments hold for all four summands in the above expression of $t_{k\ell}^{E,E'}$. Since $S_{E,x_k}$ contains $|E|_{x_k}$ monomials and $S_{E,x_\ell}$ contains $|E|_{x_\ell}$ monomials, the proof is completed.
\end{proof}

We can now prove the desired bounds. 

\begin{theorem}\label{main} Let $E$, $E'$ be strongly independent equations in $n$ unknowns and $m$ be number
of their pairwise linearly nonequivalent solutions of rank $n-1$.
Then 
\begin{itemize}
	\item[(1)]  $m\le |E|+|E'|$, 
	\item[(2)]  there exist indeces $k<\ell\le n$ such that  $m\le 2(|E|_{x_k}+|E|_{x_\ell})$, 
\end{itemize}
\end{theorem}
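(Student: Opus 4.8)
The plan is to fix the $m$ pairwise linearly nonequivalent solutions $h_1,\dots,h_m$ of rank $n-1$ (each a solution of both $E$ and $E'$) and attach to most of them a distinct irreducible factor of a single nonzero determinant $t^{E,E'}_{k\ell}$. First I would dispose of erasing solutions. By Lemma~\ref{mazaci} an erasing solution of rank $n-1$ erases exactly one variable $x_a$ and has $\Gamma_h=\NUL(e_a)$, so its linear-equivalence class is pinned down by $a$; and if two of the $h_t$ were erasing and linearly nonequivalent, then viewing one as a solution of $E$ and the other as a solution of $E'$, Lemma~\ref{dve} would make $E$ and $E'$ equivalent, contradicting strong independence. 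Hence at most one $h_t$ is erasing, and discarding it leaves $m'\ge m-1$ non-erasing solutions. For each non-erasing $h_t$ the length type $L(h_t)=\sum_i\gamma(h_t)_i$ is a strictly positive vector lying in $\Gamma_{h_t}=\NUL(\lambda^{(t)})$, where $\lambda^{(t)}$ has coprime coefficients; positivity of $L(h_t)$ forces $\lambda^{(t)}\cdot L(h_t)=0$ with $L(h_t)>0$, so $\lambda^{(t)}$ has coordinates of both signs and $\pos{\lambda^{(t)}}\ne 0\ne\zap{\lambda^{(t)}}$.

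Next I would assemble the factorization. By the lemma immediately preceding Lemma~\ref{minimal}, strong independence supplies indices, which I take to be $k<\ell$, with $t^{E,E'}_{k\ell}\ne 0$. By Lemma~\ref{dependent} each $\X^{\pos{\lambda^{(t)}}}-\X^{\zap{\lambda^{(t)}}}$ divides $t^{E,E'}_{k\ell}$, and by Lemma~\ref{generator} these polynomials are irreducible; since the hyperplanes $\Gamma_{h_t}$ are pairwise distinct, the $\lambda^{(t)}$ are distinct up to sign and hence the factors are pairwise distinct (up to a unit). As $\Z[\X]$ is a unique factorization domain, their product divides $t^{E,E'}_{k\ell}$, so
\[t^{E,E'}_{k\ell}=\prod_{t\le m'}\left(\X^{\pos{\lambda^{(t)}}}-\X^{\zap{\lambda^{(t)}}}\right)\cdot q\]
for some nonzero $q\in\Z[\X]$. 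Writing $q=\sum_j e_j\X^{\alpha_j}$ in reduced form (each $e_j=\pm1$, no two opposite-sign terms sharing an exponent) puts $t^{E,E'}_{k\ell}$ into the shape \eqref{faktorizace} with $m'$ distinct irreducible factors, all of whose exponent vectors $\pos{\lambda^{(t)}},\zap{\lambda^{(t)}}$ are nonzero. Theorem~\ref{factors} then guarantees that $t^{E,E'}_{k\ell}$ has at least $m'+1\ge m$ minimal monomials.

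Finally I would play this lower bound off against Lemma~\ref{minimal}. Applied to $t^{E,E'}_{k\ell}$, that lemma gives at most $2(|E|_{x_k}+|E|_{x_\ell})$ minimal monomials, which already yields $(2)$. Applied to $t^{E',E}_{k\ell}=-t^{E,E'}_{k\ell}$, whose set of minimal monomials is literally unchanged by the sign, it gives the symmetric bound $2(|E'|_{x_k}+|E'|_{x_\ell})$; combining them through $2\min(A,B)\le A+B$ with $A=|E|_{x_k}+|E|_{x_\ell}\le|E|$ and $B=|E'|_{x_k}+|E'|_{x_\ell}\le|E'|$ gives $m\le|E|+|E'|$, which is $(1)$. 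I expect the only genuinely delicate point to be the bookkeeping around erasing solutions: one must verify that the single discarded erasing class is exactly absorbed by the ``$+1$'' furnished by Theorem~\ref{factors} (so that $m'+1\ge m$), and that retaining only non-erasing solutions is precisely what secures the nonvanishing hypothesis $\pos{\lambda^{(t)}}\ne 0\ne\zap{\lambda^{(t)}}$ required to apply Theorem~\ref{factors}.
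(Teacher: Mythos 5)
Your proof is correct, and for part (2) it follows the paper's argument step for step: Lemma~\ref{dve} to reduce to at most one erasing class, Lemma~\ref{dependent} and Lemma~\ref{generator} to produce pairwise non-associated irreducible factors $\Xfactor{\lambda^{(i)}}$ whose product divides $t^{E,E'}_{k\ell}$, Theorem~\ref{factors} to get at least $m'+1\ge m$ minimal monomials, and Lemma~\ref{minimal} for the upper bound. Your explicit checks --- that two linearly nonequivalent erasing solutions would force $E$ and $E'$ to be equivalent, and that a non-erasing $h_i$ has $\pos{\lambda^{(i)}}\ne 0\ne\zap{\lambda^{(i)}}$ because the strictly positive vector $L(h_i)$ lies in $\NUL(\lambda^{(i)})$ --- are exactly the points the paper uses, stated there in contrapositive form. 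The only genuine divergence is in part (1). The paper proves $m\le|E|+|E'|$ by a cruder and more elementary degree count: the product of all $m$ irreducible factors (including the possible erasing one, since $X_a-1$ is also an irreducible divisor of $t$) divides $t$, each factor has positive degree, and $\deg t\le|E|+|E'|$; this bypasses Theorem~\ref{factors} entirely for (1). You instead deduce (1) from the minimal-monomial count by applying Lemma~\ref{minimal} symmetrically to $t^{E',E}_{k\ell}=-t^{E,E'}_{k\ell}$ and using $2\min(A,B)\le A+B$ with $A=|E|_{x_k}+|E|_{x_\ell}$, $B=|E'|_{x_k}+|E'|_{x_\ell}$. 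This is valid (negation does not change the set of minimal monomials) and in fact yields the slightly sharper bound $m\le 2\min(A,B)$, at the cost of making (1) depend on the heavier machinery of Theorem~\ref{factors}, which the paper's self-contained degree argument avoids.
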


\begin{proof} 
Let $h_1, h_2,\dots,h_m$ be  pairwise linearly nonequivalent solutions of rank $n-1$ of $E$ and $E'$.
By Lemma \ref{dve}, we can suppose that $h_1,h_2,\dots,h_{m-1}$ are nonerasing.
 For each $i=1,2,\dots,m-1$, let $\lambda^{(i)}$ be a vector with coprime coefficients such that $\Gamma_{h_i}=\NUL\left(\lambda^{(i)}\right)$.
 Let $1\leq k<\ell\leq n$ be such that $t=t^{E,E'}_{k\ell}\ne 0$.

Since $\Xfactor{\lambda^{(i)}}$, $i=1,2,\dots,m$, are pairwise non-associated irreducible polynomials, and  
$(\X^{\pos\lambda}-\X^{\zap\lambda})\deli t$ by Lemma~\ref{dependent}, the product $\prod_{i=1}^m(\Xfactor{\lambda^{(i)}})$ is a divisor of the polynomial $t$.
If $\pos\lambda^{(i)}=0$ or $\zap\lambda^{(i)}=0$, then $\Gamma_{h_i}=\NUL\left(\lambda^{(i)}\right)$
implies that the morphism $h_i$ is erasing. 

(1) As degree of every factor is positive and $\deg(t)\le|E|+|E'|$, the number of pairwise linearly nonequivalent solutions of rank $n-1$ is bounded by $|E|+|E'|$.

(2) Furthermore, all factors $\Xfactor{\lambda^{(i)}}$, $i=1,2\dots,m-1$ satisfy hypotheses of Theorem~\ref{factors}, which implies that  $t$ contains at least $m$ minimal monomials. The proof is completed by Lemma \ref{minimal}.
\end{proof}

As anticipated in \cite[p.16]{Saarela14}, the improvement given by Theorem \ref{main} with respect to \cite[Theorem 5.3.]{Saarela14} has consequences for the size of strongly independent systems of equations. The following lemma shows the main idea (cf. Theorem 3.5 of \cite{Saarela14}).
\begin{lemma}\label{idea}
Let $h$ be a solution of rank $n-1$ of nontrivial equations $E$ and $E'$. Let $h'$ be a solution of $E$ of rank $n-1$ that is not a solution of $E'$. Then $h$ and $h'$ are not linearly equivalent.
\end{lemma}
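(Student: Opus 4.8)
The plan is to exploit the linear-algebraic characterization of solutions via the spaces $\Gamma_h$, together with the divisibility machinery built in Lemma~\ref{dependent}. Suppose for contradiction that $h$ and $h'$ are linearly equivalent, i.e. $\Gamma_h=\Gamma_{h'}$. Since $h$ has rank $n-1$, this common space is a hyperplane, so there is a vector $\lambda\in\Z^n\setminus\{0\}$ with coprime coefficients such that $\Gamma_h=\Gamma_{h'}=\NUL(\lambda)$. The key observation I would use is that both $\SE_E(\beta)$ and $\SE_{E'}(\beta)$, evaluated at length types coming from this hyperplane, are governed by the determinants $t^{E,E'}_{jk}$.

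First I would apply Lemma~\ref{dependent} to the solution $h$ of rank $n-1$ of \emph{both} $E$ and $E'$: this yields that every determinant $t^{E,E'}_{jk}$ is divisible by $\X^{\pos\lambda}-\X^{\zap\lambda}$, and consequently $\SE_E(\beta)$ and $\SE_{E'}(\beta)$ are linearly dependent over $\Q(x)$ for each $\beta\in\Gamma_h\cap\N_0^{n-1}$. Next I would transfer this dependence to $h'$. Because $h'$ is linearly equivalent to $h$, its length type $L(h')$ lies in $\Gamma_{h'}=\Gamma_h=\NUL(\lambda)$, so $\SE_E(L(h'))$ and $\SE_{E'}(L(h'))$ are linearly dependent over $\Q(x)$. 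The point of this dependence is that it forces the two linear conditions $\SE_E(L(h'))\cdot\PE(h')=0$ and $\SE_{E'}(L(h'))\cdot\PE(h')=0$ to become equivalent, provided neither vector is the zero vector.

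The crux is therefore to rule out degeneracies. Since $E$ and $E'$ are nontrivial, Lemma~\ref{nuly}\eqref{nulajedna} gives $\SE_E\neq 0$ and $\SE_{E'}\neq 0$ as polynomial vectors; and since $h'$ has rank $n-1$, either it is nonerasing or it erases exactly one variable with $\Gamma_{h'}=\NUL(e_k)$ by Lemma~\ref{mazaci}. In either case I would argue, as in the proof of the lemma preceding Theorem~\ref{main}, that $\SE_E(L(h'))$ and $\SE_{E'}(L(h'))$ are both nonzero vectors: Lemma~\ref{nuly}\eqref{nuladva} shows that $\SE_E(\beta)=0$ would require at least two vanishing coordinates of $\beta$, which the defect-effect structure of a rank-$(n-1)$ length type precludes. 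Given that both vectors are nonzero and linearly dependent over $\Q(x)$, one is a scalar multiple of the other, so $\SE_E(L(h'))\cdot\PE(h')=0$ holds if and only if $\SE_{E'}(L(h'))\cdot\PE(h')=0$. But $h'$ is a solution of $E$, so the left-hand condition holds, forcing the right-hand condition and hence making $h'$ a solution of $E'$ — contradicting the hypothesis.

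I expect the main obstacle to be the nonvanishing argument: one must carefully verify that, for a rank-$(n-1)$ solution $h'$, the evaluated vectors $\SE_E(L(h'))$ and $\SE_{E'}(L(h'))$ cannot accidentally be the zero vector, since the equivalence of the two dot-product conditions breaks down if one of them vanishes identically. Handling the erasing case (one zero coordinate) versus the nonerasing case separately, and invoking Lemma~\ref{nuly}\eqref{nuladva} to exclude two or more zero coordinates, is where the care is needed; the rest is a direct consequence of the divisibility already established in Lemma~\ref{dependent}.
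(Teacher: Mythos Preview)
Your proposal is correct and follows essentially the same approach as the paper's own proof: assume linear equivalence, use Lemma~\ref{dependent} (applied to the common solution $h$) to get linear dependence of $\SE_E(L(h'))$ and $\SE_{E'}(L(h'))$, invoke Lemma~\ref{nuly}\eqref{nuladva} together with the fact that a rank-$(n-1)$ solution erases at most one variable to ensure both vectors are nonzero, and conclude the contradiction. The paper's argument is just a terser version of what you wrote; your extra care in separating the erasing and nonerasing cases (and citing Lemma~\ref{mazaci}) is not strictly necessary but does no harm.
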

\begin{proof}
Suppose that $h$ and $h'$ are linearly equivalent. Then $L(h')\in \Gamma_h$, and $\SE_E(L(h'))$ and $\SE_{E'}(L(h'))$ are linearly dependent  by Lemma \ref{dependent}. Since $h'$ is of rank $n-1$, at most one letter can be erased, which implies that both $\SE_E(L(h'))$ and $\SE_{E'}(L(h'))$ are nonzero by Lemma \ref{nuly}. Then $\SE_E(L(h'))\cdot \PE(h')=0$ implies $\SE_{E'}(L(h'))\cdot \PE(h')=0$, a contradiction. 
\end{proof}

\begin{remark}
Let us stress the message of the previous lemma. Independence of equations is defined by distinct sets of solutions. Lemma \ref{idea}, however, shows that the strong independence has more linear algebraic flavor: independent equations are distinguished not only by different solutions $h$ and $h'$ but also by different spaces $\Gamma_h$ and $\Gamma_h'$. More precisely, if two equations have a common solution $h$, then they are equivalent for length types from the whole $\Gamma_h$.

The following example  (suggested by Aleksi Saarela) shows that this is not a vacuous property, since an equation can have (even infinitely many) different but linearly equivalent principal solutions. Consider the well known conjugacy equation $(xz,zy)$. It has infinitely many distinct principal solutions
\begin{align*}
g_i: x\mapsto ab, \quad y\mapsto ba,\quad z\mapsto (ab)^ia,
\end{align*}
but $\Gamma_{g_i}=\NUL((1,-1,0))$ for all $i$.
\end{remark}
   
We are ready to prove the following bounds.
\begin{theorem}\label{main2} 
Let $T=\{E_1,E_2,\dots, E_m\}$ be a strongly independent system. 
Then $m\le |E_1|+|E_2|+2$ and there are $1\leq k<\ell\leq n$ such that $m\leq 2(|E_1|_{x_k}+|E_1|_{x_\ell})+2$. 

If $T$ has a solution of rank $n-1$, then $m\le |E_1|+|E_2|+1$ and there are $1\leq k<\ell\leq n$ such that $m\leq 2(|E_1|_{x_k}+|E_1|_{x_\ell})+1$.
\end{theorem}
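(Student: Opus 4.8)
The plan is to deduce Theorem~\ref{main2} from Theorem~\ref{main} by a standard counting argument that relates the size of a strongly independent system to the number of pairwise linearly nonequivalent solutions of rank $n-1$ of a fixed pair of its equations. The key bridge is Lemma~\ref{idea}: each witness solution that distinguishes a subsystem from the whole system contributes a genuinely new space $\Gamma_h$, so strong independence translates into a lower bound on the number of distinct hyperplanes, while Theorem~\ref{main} supplies the matching upper bound in terms of $|E_1|+|E_2|$ (resp. $2(|E_1|_{x_k}+|E_1|_{x_\ell})$).

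First I would set up the witnesses. By definition of strong independence, for each index $i$ the proper subsystem $T\setminus\{E_i\}$ has a solution $h_i$ of rank $n-1$ that is not a solution of $T$; thus $h_i$ solves every $E_j$ with $j\neq i$ but fails $E_i$. I would then fix the pair $E_1,E_2$ and compare the $h_i$ against a solution $h$ common to $E_1$ and $E_2$. The natural strategy is to apply Lemma~\ref{idea} with the pair $(E_1,E_2)$: for $i\ge 3$, the morphism $h_i$ solves both $E_1$ and $E_2$, and for the two ``special'' indices $i=1,2$ the solution $h_i$ solves exactly one of $E_1,E_2$ together with the rest, so Lemma~\ref{idea} forces $h_i$ to be linearly nonequivalent to any common solution of the relevant pair. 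The aim is to produce, from the $m$ witnesses, at least $m-2$ (resp. $m-1$ in the presence of a global solution) pairwise linearly nonequivalent solutions of rank $n-1$ of the single pair $E_1,E_2$, and then invoke Theorem~\ref{main}(1) and (2) to bound that count by $|E_1|+|E_2|$ and $2(|E_1|_{x_k}+|E_1|_{x_\ell})$ respectively.

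For the refined ``$+1$'' bounds under the extra hypothesis, I would use the assumed global solution $g$ of rank $n-1$ of all of $T$. Here $g$ is a common solution of $E_1$ and $E_2$, so Lemma~\ref{idea} applies directly to each witness $h_i$ (which fails some $E_j$, hence differs from $g$ on that equation), yielding that $g$ is linearly nonequivalent to each $h_i$. This lets me count $g$ together with the witnesses for indices $i\neq 1,2$, improving the loss from two to one and giving $m\le |E_1|+|E_2|+1$ and the corresponding coordinatewise bound. The bookkeeping of exactly which indices are excluded, and verifying that the chosen $h_i$ are pairwise nonequivalent to each other (not merely to a fixed reference solution), is where I would need to be most careful.

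The main obstacle I expect is precisely this pairwise nonequivalence among the witnesses themselves. Lemma~\ref{idea} is stated for a solution $h$ common to both $E$ and $E'$ versus a solution $h'$ of $E$ alone; to conclude that two witnesses $h_i$ and $h_j$ are linearly nonequivalent I must arrange that one of them plays the role of a common solution of an appropriate pair while the other fails one equation of that same pair. The cleanest route is to test each witness against a pair $(E_a,E_b)$ for which $h_i$ is a common solution and $h_j$ is not, exploiting that $h_i$ fails only $E_i$; if $i\neq j$ then $h_i$ solves $E_j$ and $h_j$ fails $E_j$, so taking the pair $(E_i,E_j)$ and noting $h_i$ solves $E_j$ while failing $E_i$, and $h_j$ solves $E_i$ while failing $E_j$, one applies Lemma~\ref{idea} to separate them. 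Handling the degenerate possibility that some $h_i$ are erasing (addressed for the pair count already via Lemma~\ref{dve}) and folding in the two additive constant adjustments is the remaining routine, but delicate, part.
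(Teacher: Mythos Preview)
Your overall strategy is exactly the paper's: produce witnesses $h_i$ for each $i$, use Lemma~\ref{idea} to show they are pairwise linearly nonequivalent, feed $h_3,\dots,h_m$ (and, under the extra hypothesis, the global solution $g$) into Theorem~\ref{main} applied to the pair $E_1,E_2$, and read off the bounds. The bookkeeping you outline for the ``$+1$'' case also matches the paper.

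There is one concrete slip. To separate $h_i$ and $h_j$ via Lemma~\ref{idea} you need a pair of equations that \emph{one} of them solves in full and the \emph{other} fails on one member. The pair $(E_i,E_j)$ you name does not work: $h_i$ fails $E_i$ and $h_j$ fails $E_j$, so neither is a common solution of that pair, and Lemma~\ref{idea} does not apply. The fix---and this is precisely what the paper does---is to bring in a third index $k\neq i,j$ (which exists once $m\ge 3$; for $m\le 2$ the bounds are trivial): then $h_i$ solves both $E_j$ and $E_k$, while $h_j$ solves $E_k$ but not $E_j$, so Lemma~\ref{idea} with $(E,E')=(E_k,E_j)$ gives $\Gamma_{h_i}\neq\Gamma_{h_j}$. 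Your earlier sentence (``arrange that one of them plays the role of a common solution of an appropriate pair'') shows you had the right idea; only the specific choice of pair needs correcting. The remark about erasing solutions is already absorbed into Theorem~\ref{main} and need not be rehandled here.
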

\begin{proof}
For each $i=1,\dots,m$, the system $T\setminus\{E_i\}$ has a solution $\ph_i$ of rank $n-1$ 
that is not a solution of $E_i$. 
Let $m\geq 3$ (otherwise the claim is trivial), and let $1\leq i,j,k\leq m$ be three distinct numbers. Then $\ph_i$ is a solution of $E_j$ and $E_k$, while $\ph_j$ is a solution of $E_k$ but not of $E_j$. Lemma \ref{idea} implies that $\ph_i$ and $\ph_j$ are not linearly equivalent.
Therefore $E_1$ and $E_2$ have $m-2$ common solutions $\ph_i$, $i=3,4,\dots,m$, of rank $n-1$, pairwise linearly nonequivalent. The first two bounds now follows from Theorem \ref{main}.

Let $T$ have a solution $\ph_0$. Lemma \ref{idea} again implies that $\ph_0$ is not linearly equivalent to any of $\ph_i$, $i=1,2,\dots,m$, and we have the second claim.
\end{proof}

Recall that an equation $(u,v)$ is \emph{balanced}, if $|u|_x=|v|_x$ for each $x\in \Xi$. If $E$ is not balanced and $h$ is a solution of rank $n-1$, then $\Gamma_h$ is uniquely determined by the length constraints induced by the equation. This implies that strongly independent systems contain balanced equations only. This was first proved in \cite{Dirk} for equations in three unknowns. In \cite{Saarela14}, the result was reproved and generalized to the form presented here.

Our approach and notation allows to characterize balanced equations by the following simple formula.
\begin{lemma}\label{nula}
$E$ be a balanced equation in $n$ unknowns if and only if
\[\left(S_{E,x_1},S_{E,x_2},\dots,S_{E,x_n}\right)\cdot \left(X_1-1,X_2-1,\dots,X_n-1\right)=0. \]
\end{lemma}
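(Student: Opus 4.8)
The plan is to prove the equivalence by computing the left-hand side explicitly and identifying precisely when it vanishes. First I would expand the dot product using the definition of $S_{E,x_j}$. Writing $E=(u,v)$ as in \eqref{rovnice}, each $S_{E,x_j}$ is a difference of two sums of prefix-monomials: the prefixes of $u$ ending just before an occurrence of $x_j$, minus the analogous prefixes of $v$. The key observation is that $X_j - 1$ is exactly the factor that, when multiplied by a prefix-monomial $\prod_{t=1}^{a-1}X_{i_t}$ corresponding to the position just before $x_{i_a}=x_j$, produces the telescoping difference $\prod_{t=1}^{a}X_{i_t} - \prod_{t=1}^{a-1}X_{i_t}$, i.e. (prefix including $x_j$) minus (prefix excluding $x_j$).

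The main computational step is therefore to show that
\[
\sum_{j=1}^n S_{E,x_j}\,(X_j-1) = \X^{\overline u} - \X^{\overline v},
\]
where $\X^{\overline u}$ denotes the monomial recording the total Parikh vector of $u$ (that is, $\prod_{j} X_j^{|u|_{x_j}}$), and similarly for $v$. To see this I would sum over all positions $a$ in $u$: the contribution of position $a$ is $\left(\prod_{t=1}^{a-1}X_{i_t}\right)(X_{i_a}-1) = \prod_{t=1}^{a}X_{i_t} - \prod_{t=1}^{a-1}X_{i_t}$, and summing these over $a=1,\dots,r$ telescopes to $\prod_{t=1}^{r}X_{i_t} - 1 = \X^{\overline u} - 1$. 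The analogous sum over $v$ gives $\X^{\overline v}-1$, and the minus sign in the definition of $S_{E,x_j}$ makes the total equal to $(\X^{\overline u}-1)-(\X^{\overline v}-1)=\X^{\overline u}-\X^{\overline v}$. The subtle point to get right is the bookkeeping that reindexes the double sum $\sum_j \sum_{a:i_a=j}$ as a single sum $\sum_a$ over positions, using that every position $a$ contributes to exactly one $j$ (namely $j=i_a$); this is where I expect the only real care is needed, since one must be sure the prefix monomial attached to position $a$ in $S_{E,x_{i_a}}$ is multiplied by precisely $X_{i_a}-1$.

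With the identity $\sum_{j} S_{E,x_j}(X_j-1)=\X^{\overline u}-\X^{\overline v}$ in hand, the equivalence is immediate: the expression $\X^{\overline u}-\X^{\overline v}$ is the zero polynomial if and only if the monomials $\X^{\overline u}$ and $\X^{\overline v}$ coincide, which happens exactly when $|u|_{x_i}=|v|_{x_i}$ for every $i$, i.e. exactly when $E$ is balanced. Thus the hardest part is purely the telescoping bookkeeping of the preceding paragraph; once that is settled the characterization of balancedness drops out with no further work.
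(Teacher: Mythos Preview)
Your proof is correct and follows essentially the same route as the paper's own argument: both compute $\sum_{j} S_{E,x_j}(X_j-1)$ explicitly and obtain the identity $\sum_{j} S_{E,x_j}(X_j-1)=\X^{\overline u}-\X^{\overline v}$ via telescoping of the prefix monomials, after which the equivalence with balancedness is immediate. The only cosmetic difference is that you reindex the double sum $\sum_j\sum_{a:i_a=j}$ into a single sum over positions before telescoping, whereas the paper phrases the same cancellation by tracking each intermediate monomial $\prod_{t=1}^{a}X_{i_t}$ and noting it occurs once with each sign; the content is identical.
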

\begin{proof}
Let 
\[
E= \left(x_{i_1}x_{i_2}\dots x_{i_r}\,,\,x_{j_1}x_{j_2}\dots x_{j_s}\right).
\]
Then
\begin{align*}
	S_{E,x_\ell}&(X_\ell-1)=\left(\sum_{a: i_a=\ell}\prod_{t=1}^{a-1}X_{i_t}-\sum_{a: j_a=\ell}\prod_{t=1}^{a-1}X_{j_t}\right) (X_\ell-1)=\\
												&=	\left(\sum_{a: i_a=\ell}\prod_{t=1}^{a}X_{i_t}-\sum_{a: j_a=\ell}\prod_{t=1}^{a}X_{j_t}\right)
												- \left(\sum_{a: i_a=\ell}\prod_{t=1}^{a-1}X_{i_t}-\sum_{a: j_a=\ell}\prod_{t=1}^{a-1}X_{j_t}\right).
\end{align*}
This implies that, for each $a<r$ the monomial $\mu=\prod_{t=1}^{a}X_{i_t}$ vanishes in 
\[\left(S_{E,x_1},S_{E,x_2},\dots,S_{E,x_n}\right)\cdot \left(X_1-1,X_2-1,\dots,X_n-1\right) \]
since $\mu$ is contained in $S_{E,x_k}(X_k-1)$ and $-\mu$ is contained in $S_{E,x_{k+1}}(X_{k+1}-1)$. Similarly, the monomial $\prod_{t=1}^{a}X_{j_t}$ vanishes for each $a<s$. Therefore
\[\left(S_{E,x_1},S_{E,x_2},\dots,S_{E,x_n}\right)\cdot \left(X_1-1,X_2-1,\dots,X_n-1\right)= \prod_{t=1}^{s}X_{i_t}-\prod_{t=1}^{r}X_{j_t},\]
which is zero if and only if the equation is balanced.
\end{proof}

Of special interest is the case of three variables, and that for two reasons. First, because this is the simplest case for which there is no bound known independent of the length of equations. In other words, it is an open question  whether independent systems of equations over three variables can be arbitrarily large. Second, because for $n=3$, solutions of rank $n-1$ are precisely all nontrivial (that is, nonperiodic) solutions.
For three variables, we have the following corollary of Lemma \ref{nula}.

\begin{corollary}\label{bal3} 
Let $E_1$ and $E_2$ be balanced equations in variables $\{x_1,x_2,x_3\}$. Then
\[(t_{23},t_{31},t_{12})=t\left(X_1-1,X_2-1,X_3-1\right)\]
for some polynomial $t\in \Z[X_1,X_2,X_3]$.
\end{corollary}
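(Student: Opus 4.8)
The plan is to read the triple $(t_{23},t_{31},t_{12})$ (where $t_{jk}$ abbreviates $t^{E_1,E_2}_{jk}$) as a vector cross product and to exploit the orthogonality relation that Lemma~\ref{nula} attaches to balanced equations. Write $\mathbf{S}_i=(S_{E_i,x_1},S_{E_i,x_2},S_{E_i,x_3})\in\Z[\X]^3$ for $i=1,2$, put $\mathbf{w}=(X_1-1,X_2-1,X_3-1)$, and set $\mathbf{c}=(t_{23},t_{31},t_{12})$. Comparing with the definition $t_{jk}=S_{E_1,x_j}S_{E_2,x_k}-S_{E_2,x_j}S_{E_1,x_k}$, one checks immediately that $\mathbf{c}=\mathbf{S}_1\times\mathbf{S}_2$. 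Because $E_1$ and $E_2$ are balanced, Lemma~\ref{nula} gives $\mathbf{S}_1\cdot\mathbf{w}=\mathbf{S}_2\cdot\mathbf{w}=0$.

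First I would invoke the vector triple-product identity $\mathbf{a}\times(\mathbf{b}\times\mathbf{d})=\mathbf{b}(\mathbf{a}\cdot\mathbf{d})-\mathbf{d}(\mathbf{a}\cdot\mathbf{b})$, which holds as a formal polynomial identity over the commutative ring $\Z[\X]$, applied to $\mathbf{a}=\mathbf{w}$, $\mathbf{b}=\mathbf{S}_1$, $\mathbf{d}=\mathbf{S}_2$. Both dot products on the right vanish by the balanced condition, so $\mathbf{w}\times\mathbf{c}=\mathbf{0}$. (Equivalently, one may verify $\mathbf{w}\times\mathbf{c}=\mathbf{0}$ directly, coordinate by coordinate, from the two balanced relations.) Writing out the three coordinates of $\mathbf{w}\times\mathbf{c}=\mathbf{0}$ produces the relations
\[(X_1-1)t_{31}=(X_2-1)t_{23},\quad (X_2-1)t_{12}=(X_3-1)t_{31},\quad (X_3-1)t_{23}=(X_1-1)t_{12}.\]

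The real content is the last step, converting these relations into a single polynomial factor. From $(X_1-1)t_{31}=(X_2-1)t_{23}$ I read off $(X_1-1)\deli(X_2-1)t_{23}$; since $X_1-1$ and $X_2-1$ are coprime in the unique factorization domain $\Z[\X]$, this forces $(X_1-1)\deli t_{23}$. I would then write $t_{23}=(X_1-1)t$ with $t\in\Z[\X]$ and feed this back into the relations: cancelling $X_1-1$ in the integral domain $\Z[\X]$ yields $t_{31}=(X_2-1)t$ from the first relation and $t_{12}=(X_3-1)t$ from the third. This gives exactly $(t_{23},t_{31},t_{12})=t(X_1-1,X_2-1,X_3-1)$, with no case distinction needed (if the $\mathbf{S}_i$ are proportional then $\mathbf{c}=\mathbf{0}$ and $t=0$). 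I expect the coprimality-and-cancellation argument to be the only genuinely substantive point: it is what upgrades the plain proportionality of $\mathbf{c}$ and $\mathbf{w}$ over $\Q(\X)$ to an honest polynomial multiple, whereas the cross-product reformulation and the triple-product identity are purely formal.
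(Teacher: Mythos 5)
Your proof is correct. It shares the paper's central observation --- that $(t_{23},t_{31},t_{12})$ is the cross product $\mathbf S_1\times\mathbf S_2$ and that Lemma~\ref{nula} makes $\mathbf w=(X_1-1,X_2-1,X_3-1)$ orthogonal to both factors --- but it executes the remaining steps differently. The paper passes to the field $\Q(X_1,X_2,X_3)$, splits into the degenerate case $t=0$ and the case where $\mathbf v_1,\mathbf v_2$ span a plane, uses that the orthogonal complement of that plane is one-dimensional to get proportionality with a scalar $t\in\Q(X_1,X_2,X_3)$, and then asserts that $t\in\Z[X_1,X_2,X_3]$ is ``easy to see.'' You instead stay inside $\Z[\X]$ throughout: the triple-product identity (valid over any commutative ring) gives $\mathbf w\times\mathbf c=0$ with no case distinction, and the passage from the resulting relations to an honest polynomial factor is done explicitly via primality of $X_1-1$ in the UFD $\Z[\X]$ and cancellation in an integral domain. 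What your route buys is precisely a proof of the integrality step that the paper leaves implicit, and the elimination of the case split; what the paper's route buys is a slightly more geometric picture (one-dimensional orthogonal complement) at the cost of that unproved final assertion. One small point of rigor worth stating in your write-up: the reason $(X_1-1)\deli(X_2-1)t_{23}$ forces $(X_1-1)\deli t_{23}$ is that $X_1-1$ is \emph{prime} in $\Z[\X]$ (its quotient is the domain $\Z[X_2,X_3]$) and does not divide $X_2-1$; ``coprime'' alone is the right intuition but primality is the property actually invoked.
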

\begin{proof}
Work in the vector space $\Q(X_1,X_2,X_3)^3$. 
Let $\mathbf v_i=\left(S_{E_i,x_1},S_{E_i,x_2},S_{E_i,x_3}\right)$, $i=1,2$. The claim holds for $t=0$ if $\mathbf v_1$ and $\mathbf v_2$ are linearly dependent.
Otherwise, the cross product $(t_{23},t_{31},t_{12})=\mathbf v_1\times \mathbf v_2$ is equal to $t\left(X_1-1,X_2-1,X_3-1\right)$, $t\in \Q(X_1,X_2,X_3)$, by Lemma \ref{nula}. Since $t_{23},t_{31},t_{12}\in \Z[X_1,X_2,X_3]$, it is easy to see that also $t\in \Z[X_1,X_2,X_3]$.
\end{proof}

Theorem \ref{main} and Corollary \ref{bal3} now yield the following claim (compare with \cite[Corollary 6.4]{Saarela14}).

\begin{corollary} 
Let $E_1, \dots ,E_m$ be an independent system of equations in three unknowns having a 
nonperiodic solution. Then 
\begin{itemize}
	\item[(1)]  $m \leq |E_i|+ |E_j|+ 1$ for any pair of distinct equations $E_i, E_j$,
	\item[(2)]  $m \leq 2(|E_1|_x + |E_1|_y) + 1$ for any pair  $x, y$ of unknowns. 
\end{itemize}
\end{corollary}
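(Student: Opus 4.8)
The plan is to derive this corollary directly from the machinery already assembled, specializing Theorem~\ref{main} and Corollary~\ref{bal3} to $n=3$. The essential observation is that for three unknowns, a solution of rank $n-1 = 2$ is exactly a nonperiodic (nontrivial) solution, as remarked in the text preceding the statement. Consequently an independent system over three unknowns that has a nonperiodic solution is in fact a \emph{strongly independent} system in the sense defined in Section~4: each proper subsystem retains a solution of rank $2$ not shared by the full system. So the first step is to verify that the hypotheses of Theorem~\ref{main2} (equivalently, of Theorem~\ref{main} applied pairwise) are met, and in particular that the system has a solution of rank $n-1$, which places us in the sharper branch of Theorem~\ref{main2} giving the ``$+1$'' bounds rather than ``$+2$''.

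For part~(1), I would apply Theorem~\ref{main2} with the solution of rank $n-1$ present, obtaining $m \le |E_i| + |E_j| + 1$ for a distinguished pair; the point to check is that the argument works for \emph{any} pair of distinct equations, not just $E_1,E_2$. This follows because the proof of Theorem~\ref{main2} only uses that two fixed equations accumulate the $m-2$ pairwise linearly nonequivalent common solutions $\varphi_i$, and the labelling of equations is arbitrary. For part~(2), the bound $m \le 2(|E_1|_x + |E_1|_y) + 1$ comes from the second branch of Theorem~\ref{main}, which counts minimal monomials in $t^{E,E'}_{k\ell}$ via Lemma~\ref{minimal}; here the role of Corollary~\ref{bal3} is to control which $t_{k\ell}$ can be used. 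Since strongly independent systems consist of balanced equations (as the text notes, citing \cite{Dirk} and \cite{Saarela14}), Corollary~\ref{bal3} tells us that the three determinants $t_{23}, t_{31}, t_{12}$ are all multiples of a single polynomial $t$, namely $(t_{23},t_{31},t_{12}) = t\,(X_1-1,X_2-1,X_3-1)$.

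The main technical point, and the place I expect to spend the most care, is extracting the freedom to choose the pair $\{x,y\}$ arbitrarily in part~(2). In Theorem~\ref{main} the indices $k<\ell$ are \emph{produced} by the proof rather than chosen, so I cannot directly claim the bound for a prescribed pair. The fix exploits Corollary~\ref{bal3}: because each $t_{k\ell}$ equals $t$ times a factor $(X_c-1)$, the irreducible factors $\X^{\pos\lambda^{(i)}}-\X^{\zap\lambda^{(i)}}$ coming from the $m-2$ common solutions must all divide the common cofactor $t$ (they cannot be absorbed into a single $(X_c - 1)$, since each such linear binomial is irreducible and distinct from the $\lambda^{(i)}$-factors unless the solution is one of the degenerate erasing types already handled). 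Hence $\prod_i (\X^{\pos\lambda^{(i)}}-\X^{\zap\lambda^{(i)}})$ divides $t$ itself, and therefore divides every $t_{k\ell}$ simultaneously. This lets me run the minimal-monomial count of Lemma~\ref{minimal} on \emph{whichever} $t_{k\ell}$ I like, yielding the bound $m \le 2(|E_1|_{x}+|E_1|_{y})+1$ for any chosen pair $x,y$.

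Concretely, then, the proof structure will be: (i) reduce ``independent with a nonperiodic solution'' over three unknowns to ``strongly independent with a solution of rank $n-1$''; (ii) invoke Theorem~\ref{main2} directly for part~(1), noting the pair of equations is arbitrary; (iii) for part~(2), use Corollary~\ref{bal3} to show the product of the $m-2$ irreducible factors divides the common polynomial $t$, hence divides each $t_{k\ell}$, and then apply the minimal-monomial bound of Theorem~\ref{factors} together with Lemma~\ref{minimal} to the $t_{k\ell}$ indexed by the desired pair $\{x,y\}$. I would keep the two parts visibly parallel to Theorem~\ref{main2}, since the $n=3$ case is really that theorem read through the lens of Corollary~\ref{bal3}.
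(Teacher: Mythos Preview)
Your proposal is correct and follows the route the paper intends. The paper gives no explicit proof, merely pointing to Theorem~\ref{main} and Corollary~\ref{bal3}; your expansion is exactly the intended one, and in particular you have correctly isolated the one nontrivial point, namely that Corollary~\ref{bal3} is what upgrades the \emph{existential} pair $k<\ell$ in Theorem~\ref{main}(2) to an \emph{arbitrary} pair $x,y$ in part~(2), because the product of the irreducible factors $\Xfactor{\lambda^{(i)}}$ (for the nonerasing solutions) must divide the common polynomial $t$ and hence every $t_{k\ell}$ simultaneously.
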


We conclude by an example, which shows that our results allow to obtain in a simple way concrete information about particular cases. Consider the following  system of two independent equations in three unknowns $x=x_1$, $y=x_2$, and $z=x_3$: 
\begin{align*}
	E_1&=(xyxz,zxyx),\\
	E_2&=(xyxxz,zxxyx).
\end{align*}
We denote $X=X_1$, $Y=X_2$ and $Z=X_3$ and calculate
\begin{align*}
\left(S_{E_1,x},S_{E_1,y},S_{E_1,z}\right)&=\left(1 + XY - Z - XYZ,\, X - XZ,\,  X^2Y-1 \right),\\
\left(S_{E_2,x},S_{E_2,y},S_{E_2,z}\right)&=\left(1 +XY+X^2Y -Z -ZX - X^2YZ,\, X - X^2 Z,\,  X^3 Y - 1\right)
\end{align*}
and
\[(t_{23},t_{31},t_{12})=X(X^2 Y - Z)(X-1,Y-1,Z-1).\]
The polynomial $t=X(X^2 Y-Z)$ characterizes possible nonperiodic common solutions of $E_1$ and $E_2$. Note that the bound of Theorem \ref{main} comes from the bound on the number of hyperplanes covering length types of solutions of rank $n-1$. In other words, this is the number of factors of the form $\Xfactor\lambda$ dividing the corresponding determinant. In the present example, there is only one such factor, namely $(X^2Y-Z)$. This means that each possible common solution $h$ of $E_1$ and $E_2$ of rank two must satisfy $2|h(x)|+|h(y)|=|h(z)|$. Whether such a solution exists can be checked by standard means.

\bibliographystyle{plain}	
\bibliography{polynomy}		

\end{document}